 \newtheorem{theorem}{Theorem}
  \newtheorem{lemma}[theorem]{Lemma}
  \newtheorem*{definition*}{Definition}
  \newtheorem{corollary}[theorem]{Corollary}
  \theoremstyle{remark}
    \def \N {\mathbb{N}}
    \def \R {\mathbb{R}}
\tikzset{>={Latex[width=2mm,length=2mm]}}
\definecolor{cobalt}{rgb}{0.0, 0.28, 0.67}
\newcommand{\set}[1]{\left\{{#1}\right\}}
\renewcommand{\geq}{\ensuremath{\geqslant}}
\renewcommand{\leq}{\ensuremath{\leqslant}}
\newcommand{\strategy}{\mathbf{p}}
\newcommand{\protocol}{\Xi}
\newcommand{\poa}{\text{PoA}}
\newcommand{\newcost}{\hat{c}}
\newcommand{\onlinealgo}{\textsc{GoWithTheFlow}}
\newcommand{\err}{\delta}
\newcommand{\distance}{D}
\newcommand{\terminal}{R}
\newcommand{\ordering}{\pi}
\newcommand{\OPT}{\texttt{OPT}} 
\newcommand{\MST}{\texttt{MST}}
    \def \N {\mathbb{N}}
    \def \R {\mathbb{R}}
    \def \DH {G_{H'}}
    \def \CC {X}
\title{Improved Price of Anarchy via Predictions}
\author[a]{Vasilis Gkatzelis\thanks{gkatz@drexel.edu}}
\author[b]{Kostas Kollias\thanks{kostaskollias@google.com}}
\author[c]{Alkmini Sgouritsa\thanks{alkmini@liv.ac.uk}}
\author[a]{Xizhi Tan\thanks{xizhi@drexel.edu}}
\affil[a]{Drexel University, Computer Science}
\affil[b]{Google Research}
\affil[c]{University of Liverpool, Computer Science}
\begin{document}
\newtheorem{claim}[theorem]{Claim}
\date{}
\maketitle

\begin{abstract}
A central goal in algorithmic game theory is to analyze the performance of decentralized multiagent systems, like communication and information networks. In the absence of a central planner who can enforce how these systems are utilized, the users can strategically interact with the system, aiming to maximize their own utility, possibly leading to very inefficient outcomes, and thus a high price of anarchy. To alleviate this issue, the system designer can use decentralized mechanisms that regulate the use of each resource (e.g., using local queuing protocols or scheduling mechanisms), but with only limited information regarding the state of the system. These information limitations have a severe impact on what such decentralized mechanisms can achieve, so most of the success stories in this literature have had to make restrictive assumptions (e.g., by either restricting the structure of the networks or the types of cost functions). 

In this paper, we overcome some of the obstacles that the literature has imposed on decentralized mechanisms, by designing mechanisms that are enhanced with predictions regarding the missing information. Specifically, inspired by the big success of the literature on ``algorithms with predictions'', we design decentralized mechanisms with predictions and evaluate their price of anarchy as a function of the prediction error, focusing on two very well-studied classes of games: scheduling games and multicast network formation games.
\end{abstract}

\newpage

\section{Introduction}
In this paper we revisit two classic decentralized resource allocation problems, scheduling games and network formation games, aiming to achieve improved price of anarchy bounds by leveraging predictions. Like many of the important games in algorithmic game theory, these two classes of games correspond to special cases of a very general model defined using a network with load-dependent cost functions. Given a graph $G=(V,E)$ and a set of $n$ users $N$, each user $i$ needs to use a path in order to connect from a source vertex $s_i\in V$ to a terminal vertex $t_i\in V$. For each edge $e\in E$, if a total load of $\ell$ players choose to use it, then this generates a cost $c_e(\ell)$, which is passed on to the players using it. The players strategically choose their paths, aiming to minimize their cost, and the performance of the induced game is evaluated using its \emph{price of anarchy}: the social cost in the ``worst'' Nash equilibria     of the game, over the optimal social cost.

For example, in the well-studied \emph{multicast network formation game} \citep{ADKTWR08,Li09,BCFM13,LL13,FHP16}, all the users share the same source $s$, and the cost functions are constant ($c_e(\ell)=c_e$ for all $e\in E$ and $\ell>0$). The initial work on this problem assumed that the cost of each edge is divided equally among its users, but this can give rise to Nash equilibria that are very inefficient, leading to a price of anarchy that grows linearly with the number of agents. If we knew exactly the set of users in advance (i.e., if we knew the set of terminals that would need to be connected to the source), then we could more carefully determine how to share the cost of each edge among its users, leading to a price of anarchy of 2~\citep{CRV10}.
However, in decentralized systems this information may not be fully known in advance, so the cost-sharing protocol may need to decide how to share the cost using only limited information. 

To better understand the impact of information limitations on the performance of cost-sharing protocols, prior work has introduced a range of models, depending on the amount of information available to the designer: i) \emph{oblivious} protocols, that are independent of the graph structure and the set of users, ii) \emph{resource-aware} protocols, that are aware of the graph structure and its cost functions, but unaware of the set of users, and iii) \emph{omniscient} protocols, that know everything about the instance at hand.\footnote{Some of the prior work also refers to oblivious protocols as ``uniform'' and to omniscient ones as ``non-uniform.''} 
For each of these information models, a long list of papers has aimed to design cost-sharing protocols that are \emph{stable} (i.e., guarantee the existence of pure Nash equilibria), and optimize the price of anarchy. However, even for resource-aware cost-sharing protocols, the results are often very pessimistic, unless we impose significant restrictions on the class of instances.

Although omniscient protocols require a possibly unrealistic amount of information, the assumption that resource-aware protocols have no information regarding the anticipated demand is unrealistic as well. Given the vast amounts of historical data that is stored and readily available, even off-the-shelf machine learning algorithms could provide a reasonable estimate regarding future demand. Therefore the severe information limitations that lead to these impossibility results may be unnecessarily pessimistic: a decentralized protocol could be augmented with some estimate regarding the future demand, and it could use this estimate as a guide for its cost-sharing decisions. 

To overcome analogous pessimistic results due to information limitations, the online algorithms literature introduced a model for designing and analyzing ``algorithms with predictions'' (see, e.g., \citep{LV18,PSK18,GP19,BCKP20,AGP20,APT21}). The goal is to design algorithms, enhanced with a prediction, that perform very well when the prediction is accurate, yet still maintain some worst-case guarantees even if it is not. The learning-augmented framework was very recently also adapted to multiagent systems involving strategic agents, giving rise to a research agenda focusing on the design of (centralized) ``mechanisms with predictions''~\citep{ABGOT22}. In this paper we extend this agenda beyond centralized systems and study the extent to which predictions can enable the design of more practical protocols for distributed multiagent systems, leading to improved price of anarchy bounds. The main question that we focus on is:

\vspace{1pt}
\begin{center}
\emph{Can decentralized protocols, enhanced with predictions, achieve improved price of\\ anarchy bounds, and how do these bounds depend on the prediction accuracy?}\end{center}

Two central notions in the literature on ``algorithms with predictions'' are \emph{consistency} and \emph{robustness}. The consistency of an algorithm (or, in our case, a protocol) is the performance guarantee that it achieves, assuming that the prediction it was provided with is accurate. Its robustness is the worst-case performance guarantee that it achieves, irrespective of the quality of the prediction. In some problems, achieving the optimal consistency needs to come at the cost of robustness, i.e., it is impossible to also simultaneously achieve the best known worst-case guarantees. Our main results in this paper provide decentralized protocols that simultaneously achieve the best-possible consistency and robustness guarantees, up to small constants.

\subsection{Our Results}
To evaluate the potential impact of predictions on the price of anarchy bounds that we can achieve, we design learning-augmented cost-sharing protocols that are enhanced with predictions regarding the demand that they should anticipate. Depending on the setting at hand, these predictions are on the volume of the demands or the locations of the terminals that they are associated with. Guided by this information, the protocols carefully adjust the cost share of each user and, even though they remain oblivious to the actual demand that appears, they achieve bounds that improve as a function of the prediction quality. In fact, we prove that our protocols simultaneously achieve the best known worst-case guarantees (robustness) and the best possible guarantees when the predictions are accurate (consistency), up to small constant factors.

\vspace{5pt}
\textbf{Games with General Cost Functions over Series-Parallel Graphs (Section~\ref{sec:seriesparallel}).}
We first consider the class of symmetric games (all the agents need to connect the same source to the same terminal, so they have the same set of strategies) over series-parallel graphs, which generalizes the classic scheduling games (which can be captured by a simple two-node graph with parallel edges). 
For each edge of the graph, we allow its cost function to be an arbitrary non-decreasing function of the number of agents using it. This is in contrast to most of the prior literature which imposes some type of structure or parameterization on the allowable cost functions (e.g., concavity, convexity, or some type of boundedness) (e.g., \citep{vFH13,CGS17,GPS21}). For this class of games, the best known price of anarchy upper bound via a resource-aware protocol is $O(n)$, and prior work has shown that without information regarding the number of users, no stable cost-sharing protocol (i.e., a protocol that admits a pure Nash equilibrium) can achieve a price of anarchy better than $O(\sqrt{n})$~\citep{CGS17}. In fact this lower bound holds even for scheduling games with capacitated constant cost functions\footnote{Given two constants, $c$ and $t$, a capacitated constant cost function is equal to $c$ as long as its input is at most $t$ and infinite otherwise.}. To overcome this obstacle, we consider the design of cost-sharing protocols that are enhanced with a, possibly erroneous, prediction regarding the total number of agents that will be using the system. 

Our main result in this setting is a cost-sharing protocol that uses the prediction, $\hat{n}$, on the number of users to achieve a price of anarchy of 4 when the predictions are correct, i.e., when $n=\hat{n}$. More surprisingly, we prove that this protocol maintains a good price of anarchy bound even if the prediction is inaccurate: if $\delta=|n-\hat{n}|$ is the prediction error, we prove a price of anarchy bound of $\min\{4(\delta+1),~ 4n\}$. In other words, when the prediction is accurate, this protocol achieves a price of anarchy of 4, while simultaneously guaranteeing a price of anarchy of $O(n)$, even if the prediction is arbitrarily inaccurate (which matches the best known worst-case price of anarchy bound, even for the special case of scheduling games). Furthermore, this bound provides a major improvement even if the prediction is not perfect, i.e., $\delta$ is positive but not too large.

To achieve this result, we first use an online algorithm to determine how many agents should be using each edge, assuming the prediction is correct. Then, our cost-sharing protocol applies carefully chosen penalties if the number of agents using it exceeds this ``threshold''. If we made these penalties arbitrarily high, this would guarantee a good outcome when the prediction is correct (no agent would want to suffer the penalty). However, such penalties could lead to very bad price of anarchy bounds if the number of agents was underpredicted, i.e., $\hat{n}<n$, since some of them would be forced to suffer these, otherwise unnecessary, penalties. On the other hand, if the penalties are not high enough, then the agents may end up exceeding the edge usage thresholds anyway, leading to inefficient outcomes and high price of anarchy, even if the prediction is correct. The main novelties of our protocol are two-fold: i) First, the way in which it determines the threshold for each edge as a function of the graph structure and the prediction $\hat{n}$, using an online algorithm (see Section~\ref{sec:online} for more details). ii) Second, the way it determines how much to penalize the agents that exceed this threshold, in order to optimize the aforementioned trade-off. 

\vspace{5pt}
\textbf{Multicast Network Formation Games over General Graphs (Section~\ref{sec:multicast}).} We then also revisit the well-studied class of multicast network formation games over general graphs. For this class of games, we know that without any information regarding the set of users, no stable cost-sharing protocol can achieve a price of anarchy better than $O(\log(n))$~\citep{CS16}. Aiming to overcome this obstacle, we turn to mechanisms that are enhanced with predictions regarding the users. However, since these are not symmetric games (each agent may want to connect to a different terminal node in the graph, so their set of strategies can be very different), knowing the \emph{number} of agents alone is not sufficient. Therefore we consider the design of cost-sharing protocols equipped with a prediction on the \emph{set of terminals} $H\subseteq V$, corresponding to the locations in the graph where the agents' terminals are expected to appear. 

In this setting, we design a cost-sharing protocol that uses the predicted terminals $H$ and achieves a price of anarchy of 4 when the predictions are correct. Crucially, as in the previous setting, our bounds provide good price of anarchy guarantees even if the predictions are inaccurate. First, as a warm-up, we assume that the set of agents is known and the predictions are regarding the location of each agent's terminal. If the distance of each terminal $t_i$ from its predicted location is $d_i$ and the overall prediction error is $D=\sum_{i\in N}d_i$, then we achieve a price of anarchy upper bound of $\min\left\{4+\frac{6D}{\OPT},~ \log n\right\}$, where $\OPT$ is the optimal social cost.  What is particularly appealing about this bound is that it maintains the best possible worst-case price of anarchy of $O(\log n)$, even if the predictions are arbitrarily inaccurate, while simultaneously guaranteeing much stronger bounds when the prediction error is small. We then move one step further and also consider settings where even the number of agents that will arrive is unknown.
In this case, we define the prediction error by generalizing a framework recently proposed in the context of online graph algorithms~\citep{APT21}. If $R$ is the set of terminals of the agents that actually appear and $H$ is the set of predicted terminals (where $|R|$ may not be equal to $|H|$), then we consider any assignment $\eta: R'\to H'$, where $R'\subseteq R$ is a subset of agent terminals and $H'\subseteq H$ is a subset of predicted locations (not all terminals need to be assigned to a prediction and multiple terminals could be assigned to the same prediction). For a given assignment $\eta$ (and the corresponding subsets $R'$ and $H'$), we let $\delta$ be the number of unassigned terminals and predicted locations (i.e., $\delta= |H\setminus H'|+|R\setminus R'|$) and let $D$ be the total distance with respect to the assignment (i.e., the sum over all the terminals $t\in R'$ of their distance from their assigned prediction $\eta(t)\in H'$). If $\mathcal{D}$ is the set of all $(D, \delta)$ pairs that correspond to some assignment $\eta$, our main result on multicast network formation games is a protocol whose price of anarchy is at most $\min\left\{\min_{(D,\delta)\in\mathcal{D}}\left\{ 4+\frac{6\distance}{\OPT}+\log \err \right\},~\log n \right\}$. Note that if $|H|=|R|$, i.e., the number of agents is predicted correctly, this bound is at least as good as the bound we achieved when the set of agents is known: we can just use the minimum weight matching of agents to predictions as the assignment $\eta$. However, our new bound can be even stronger, since we can also keep some agents unassigned, or assigned to the same prediction.

The way that our cost-sharing protocol achieves this bound is very different from the approach used in the symmetric setting of Section~\ref{sec:seriesparallel} (the fact that they both yield a price of anarchy of $4$ for perfect predictions is merely a coincidence). Using the predicted terminals, the protocol first computes the minimum spanning tree that connects these predicted terminals to the source. Then, based on the structure of this tree, it determines a priority ordering over the predicted terminals, and this ordering is extended to all other nodes as well, based on their proximity to the predicted terminals. Once this global priority ordering of all the vertices has been determined, the cost-sharing protocol is rather straightforward: the whole cost of each edge is charged to the user whose terminal has the highest priority. Therefore, the main novelty of the protocol is the way in which this global ordering is determined, using the graph structure and the predictions as input.

\subsection{Related Work}
Our work extends the literature on resource-aware cost-sharing protocols for optimizing equilibria, specifically the Price of Anarchy (PoA) and Price of Stability (PoS) metrics. The PoA measures the worst-case inefficiency of the worst equilibrium in the game whereas PoS measures the inefficiency of the best equilibrium in the game. \citet{CS16} were the first to study this family of mechanisms, focusing on the class of network formation games (like \citet{CRV10} did, from the perspective of oblivious mechanisms). 
They showed that, when the graph is outerplanar, resource-aware mechanisms can outperform oblivious ones, but they also proved that an analogous separation is not possible for general graphs. In subsequent work, \citet{CGS17} designed resource-aware mechanisms for the case of scheduling games (which correspond to special case of parallel-link graphs), and were able to achieve a constant PoA for instances with convex and concave cost functions. Subsequently, \citet{CGLS20} extended many of these results to graphs, beyond parallel links, including directed acyclic or series parallel graphs, with convex or concave cost functions on the edges.

Resource-aware cost-sharing protocols with some additional prior information regarding the users were also part of the model studied by \citet{CS16} for the case of network formation games. Specifically, rather than assuming that the source vertex of each agent is chosen adversarially, they assumed that it is drawn from a distribution over all vertices. The cost-sharing mechanism is aware of this stochastic process, so they designed a mechanism that leverages this information to achieve a constant PoA. Following-up on this work, \citet{CLS16} extended the constant PoA to include Bayesian Nash equilibria. Recently, \citet{GPS21} showed that with some information about the users the PoA of resource-aware protocols can be significantly improved for the class of scheduling games with bounded cost functions. In this work, two different types of information are considered: knowing two of the participating agents' IDs in advance, or knowing the probability with which each of the agents appears in the system.

An important characterization of the stability property for oblivious cost-sharing mechanisms was given in \citep{GMW14}. They proved that these mechanisms correspond to the class of generalized weighted Shapley values. Leveraging this characterization, \citet{GKR16} analyzed this family of cost-sharing protocols and showed that the PoA achieved by the unweighted Shapley value is optimal for a large family of network cost-sharing games.

Other papers on the design and analysis of cost-sharing protocols include the work of \citet{vFH14}, who focused on capacitated facility location games, \citet{GKK15} who proved tight bounds for general cost-sharing mechanisms, \citet{MW13}, who considered a utility maximization model, and \citet{HHHS18}, who considered a model that imposes some constraints over the portions of the cost that can be shared among the agents. Also, \citet{HM11} 
studied the performance of several cost-sharing protocols in a setting where each player can declare a different demand for each resource.

The PoS has received less attention than the PoA in terms of designing mechanisms that seek to optimize it by leveraging the network's structure or information about the participating agents. Instead, the PoS has been studied for specific classes of omniscient cost-sharing mechanisms, such as fair cost-sharing and weighted Shapley values. Beginning with the directed network formation game, \citet{ADKTWR08} proved tight logarithmic bounds for the directed network formation game with fair cost-sharing. Subsequently, \citet{KR15} showed tight PoS bounds for the class of weighted Shapley values. Various works study fair cost-sharing in the more challenging undirected model for network formation games. For broadcast games (where all nodes are terminals for players who originate at the same root) a (large) constant upper bound was given by \citet{BFM13}. For multicast games (where all players have the same root but not all nodes are player terminals) the best known upper bound is $O(\log n/\log \log n)$ \citep{Li09}, where $n$ is the number of players. In general networks, the upper bound is $O(\log n)$, which follows by \citep{ADKTWR08}. The best known lower bounds for the various models are small constants given by \citet{BCFM13}. The work of \citet{LL13} and \citet{FHP16} presented evidence that constant upper bounds are likely in multicast and general games. Going beyond network formation games, \citet{CG16} prove asymptotically tight bounds on the PoS in games with polynomial edge cost functions.

Finally, there are several other models in which cost-sharing has played a central role. For example, \citet{MS01} focused on participation games, while \citet{M08} and \citet{MR09} studied queueing
games. \citet{CGV17} recently also pointed out some connections between cost-sharing mechanisms and the literature on coordination mechanisms, which started with the work of \citet{CKN09} and led to several papers focusing on scheduling games from a designer's perspective \citep{I+09, AJM08, C09, AH12, K13, C+13, CMP14, BIKM14}. Just like the research on cost-sharing mechanisms, most of the work on coordination mechanisms studies how the PoA varies with the choice of local scheduling policies on each machine (i.e., the order in which to process jobs assigned to the same machine). 

Our work is also related to the design of learning-augmented algorithms which leverage predictions from machine-learned models. The underlying goal is to design algorithms that gracefully degrade as the prediction error increases and still achieve non-trivial worst-case guarantees. Several recent papers study optimization problems in this context. \citet{LV18} study such algorithms for the caching problem, \citet{GP19} and \citet{AGP20} focus on rent or buy, \citet{PSK18} on scheduling, \citet{BCKP20} on online learning, \citet{MV17} on reserve price optimization, and \citet{HIKV19} on frequency estimation. More recently \citet{APT21} focus on a collection of graph problems and \citet{ABGOT22} on the design of strategyproof mechanisms with predictions.

\section{Preliminaries}
\label{sec:prelims}

We consider two classes of games played on an undirected graph $G=(V,E)$ by a set of players $N = \{1,\dots,n\}$, corresponding to the set of users. In these games, each player $i\in N$ needs to choose a path in $G$ that connects a designated \emph{source} $s$ (which is the same for all players) to a \emph{terminal} $t_i$ (which may be different for each player). Each edge $e\in E$ is characterized by a cost function $c_e : \N \rightarrow \R^+$, where $c_e(\ell)$ is the cost of the edge $e$ when the load on the edge, i.e., the number of player using it, is $\ell$. The cost function for every edge $e$ satisfies $c_e(0) = 0$, i.e., no cost is induced on $e$ unless some player uses it. The first class of games that we consider is {\em symmetric series parallel network games}, where every player $i\in N$ has the same terminal $t$ and the graph $G$ is series-parallel (series-parallel graphs are defined recursively using two simple composition operations; see Section~\ref{sec:seriesparallel} for a formal definition.) In this class of games, we allow the cost function of each edge to be an arbitrary non-decreasing function. Note that this includes the well-studied class of \emph{scheduling games}, which can be captured using a multigraph with just two vertices, $s$ and $t$, and multiple parallel links connecting them (where each edge corresponds to a machine). The second class of games that we consider is {\em multicast network formation games}, where the graph can be arbitrary and each agent can have a different terminal $t_i$, but the cost $c_e(\ell)$ of each edge $e\in E$ is equal to some edge-specific constant, $c_e$, for any load $\ell\geq 1$.

\vspace{5pt}
\textbf{Strategy Profile} In all games, let $\mathcal{P}_i$ be the set of all possible strategies for player $i$, i.e., the set of paths between the vertices that player $i$ wants to connect. In the network formation games that we consider, this set can be different for each player, but the series-parallel class game is {\em symmetric}, meaning that every player has the same strategy set $\mathcal{P}_i$. In the special case of scheduling games, the strategies are single edges (singleton games) and $\mathcal{P}_i=E$ for all $i$. 
A 
pure strategy profile is given by $\strategy = (p_1,  p_2 \dots, p_n)$, where $p_i \in \mathcal{P}_i$ is the path chosen by each player $i\in N$.

\vspace{5pt}
\textbf{Cost-Sharing Protocol} Let $S_e(\strategy) = \{i \in N: e \in p_i\}$ be the set of players using edge $e$ under strategy profile $\strategy$, and let $\ell_e(\strategy) = |S_e(\strategy)|$ be the \emph{load} on edge $e$. The cost of $e$ in this allocation is $c_e(\ell_e(\strategy))$, and this cost needs to be covered by the set $S_e(\strategy)$ of players using it. In this paper we design \emph{cost-sharing methods}, i.e. protocols that decide how the cost of each edge will be distributed among its users. Formally, a cost-sharing protocol $\protocol$ defines, at each strategy profile $\strategy$, a cost share $\xi_{ie}(\strategy)$ for each $i \in N$ and $e \in E$. For player $i$ with $e \notin p_i$, we have $\xi_{ie}(\strategy) = 0$, so only the players using an edge are responsible for its cost. We denote the total cost share of player $i$ in $\strategy$ as:
\[\xi_i(\strategy) = \sum_{e \in E}\xi_{ie}(\strategy).\] 
A cost-sharing protocol is \emph{budget-balanced} if for every edge $e$ and profile $\strategy$ we have $\sum_{i \in N} \xi_{ie}(\strategy) = c_e(\ell_e(\strategy))$, i.e., the cost shares that the protocol distributes to the players using an edge adds up to exactly the cost of the edge.

\vspace{5pt}
\textbf{Ordered Protocol} An {\em ordered} protocol is a priority-based cost-sharing protocol that is defined as follows. 
Given an ordering $\pi$ of the players and a strategy profile $\strategy$, the amount that the ordered protocol charges each player $i$ for each edge $e$ is $$\xi_{ie}(\strategy) = c_e(\ell_e^{<i}(\strategy)+1)-c_e(\ell_e^{<i}(\strategy)),$$ where $\ell_e^{<i}(\strategy)$ is the number of players using $e$ that precede player $i$ in order $\pi$. In other words, if we assumed that the players of $S_e(\strategy)$ arrive one at a time according to the ordering $\pi$, each player $i$ can be thought of as increasing the cost of edge $e$ by $c_e(\ell_e^{<i}(\strategy)+1)-c_e(\ell_e^{<i}(\strategy))$ and is charged that marginal cost.

\vspace{5pt}
\textbf{Classes of Games}
We aim to design protocols that yield efficient outcomes in all games within a class of games. Formally, a class of network games 
$\mathbf{\Gamma} = (\mathcal{N}, \mathcal{G}, \mathcal{C}, \protocol)$ comprises a universe of players $\mathcal{N}$, a universe of graph $\mathcal{G}$, whose cost functions are chosen from the set $\mathcal{C}$, and a cost sharing protocol $\protocol$. A game $\Gamma \in \mathbf{\Gamma}$ then consists of a graph $G \in \mathcal{G}$ with cost functions from $\mathcal{C}$, a set of players $N \in \mathcal{N}$, and the cost sharing protocol $\protocol$.

\vspace{5pt}
\textbf{Pure Nash Equilibrium (PNE)} The goal of every player is to minimize her total cost share. Therefore, different cost-sharing protocols would lead to different classes of games and possibly very different outcomes. The efficiency of a game, thus, crucially depends on the choice of the protocol. To evaluate the performance of a cost-sharing protocol, we measure the quality of the pure Nash equilibria in the game that it induces. A strategy profile $\strategy$ is a \emph{pure Nash equilibrium} (PNE) of a game $\Gamma$ if for every player $i \in N$ who uses path $p_i$ in $\strategy$, and every alternative path $p_i'\in \mathcal{P}_i$, we have
\[\xi_i(\strategy) = \xi_i(p_i, \strategy_{-i}) \leq \xi_i(p'_i, \strategy_{-i} ),\] where $ \strategy_{-i}$ denotes the vector of strategies for all players other than $i$. This expression suggests that in a PNE no player can decrease her cost share by unilaterally deviating from path $p_i$ to $p_i'$ if all other players' strategies remain fixed. A PNE is a natural prediction regarding the outcome of the game, but not all games are guaranteed to possess a PNE. To address this issue, prior work on cost-sharing (as well as this paper) focuses on the design of \emph{stable} protocols, i.e., ones that they induce games with at least one PNE for every possible graph and set of players.

\vspace{5pt}
\textbf{Price of Anarchy (PoA)} To evaluate the efficiency of a strategy profile $\strategy$, we use the total cost $c(\strategy) = \sum_{e \in E} c_e(\ell_e(\strategy))$, and we quantify the performance of the cost-sharing protocol using the price of anarchy measure. Given a cost-sharing protocol $\protocol$, the \emph{price of anarchy} (PoA) of the induced class of games $\mathbf{\Gamma} = (\mathcal{N}, \mathcal{G}, \mathcal{C}, \protocol)$ is defined to be the worst-case ratio of equilibrium cost to optimal cost over all games in $\mathbf{\Gamma}$. Let $Eq(\Gamma)$ be the set of pure Nash equilibria and $F(\Gamma)$ be the set of all pure strategy profiles of the game $\Gamma$, then 
\[\poa(\mathbf{\Gamma}) = \sup_{\Gamma \in \mathbf{\Gamma}} \frac{\max_{\strategy \in Eq(\Gamma)} c(\strategy)}{\min_{\strategy^* \in F(\Gamma)} c(\strategy^*)}.\]

\vspace{5pt}
\textbf{Overcharging} In addition to budget-balanced protocols, we also consider mechanisms that may use overcharging. In effect, these mechanisms define a modified cost function $\newcost_e(\ell) \geq c_e(\ell)$ for all $e,\ell$ and then apply a budget-balanced protocol on these modified functions. As a result, the social cost of a given strategy profile $\strategy$ may be increased from $c(\strategy)$ to $\newcost(\strategy) = \sum_{e \in E}\newcost_e(\ell_e(\strategy))$. For these protocols, we measure the quality of the equilibria using the new costs, but we still compare their performance to the optimal solution based on the original cost functions: 
\[\poa(\mathbf{\Gamma}) = \sup_{\Gamma \in \mathbf{\Gamma}} \frac{\max_{\strategy \in Eq(\Gamma)} \newcost(\strategy)}{\min_{\strategy^* \in F(\Gamma)} c(\strategy^*)}.\]

\vspace{5pt}
\textbf{Informational Assumptions} Throughout this paper we focus on the design of \emph{resource-aware} cost-sharing protocols with \emph{predictions}. The prediction is a forecast on the set of players, specifically the cardinality of $N$ for symmetric series parallel network games and the set of terminals for multicast network games. The information available to the cost-sharing protocol of each edge is: the set of players using the edge, the structure of the network, the cost functions, and the prediction. The protocol does not know the realized set of players not using it or the strategies they have selected.

\section{Games with General Cost Functions over Series-Parallel Networks}\label{sec:seriesparallel}
In this section we study the impact of predictions in a class of symmetric games with general cost functions over series-parallel graphs. The graph has a designated source $s$ and a designated terminal $t$, and all users need to connect from $s$ to $t$. This is a significant generalization of the class of \emph{scheduling games}, which is captured by a multigraph of just two vertices $s$ and $t$ connected by multiple parallel edges (note that for any such multigraph we can construct an equivalent graph without parallel edges, where for each edge $(s,t)$ of the original multigraph we introduce a new vertex $w$ and two edges $(s,w)$ and $(w,t)$). Even for the special case of scheduling games, prior work has shown that no budget-balanced cost-sharing mechanism can achieve a PoA better than $O(\log n)$ \citep{vFH14}, even with full information (i.e., if it knows the set of users). Without information regarding the number of users, no cost-sharing mechanism can achieve a PoA better than $O(\sqrt{n})$, even with overcharging, and the best known PoA upper bound is $O(n)$~\citep{CGS17}.

Our main result in this section is a resource-aware mechanism with overcharging that does not know the number of users, but is enhanced with a prediction $\hat{n}$ regarding this number. We prove that the PoA of this mechanism is at most $\min\{4(\delta+1),~ 4n\}$, where $\delta=|n-\hat{n}|$ is the prediction error. This implies a PoA of 4 if the prediction is accurate, it maintains the best known PoA of $O(n)$ no matter how bad the prediction is, and it also provides a major improvement if the prediction is inaccurate but the error is not too large. Notably, in contrast to prior work that imposes structural restrictions on the types of cost functions considered, our results work for general non-decreasing cost functions.

To define how our mechanism shares the cost generated on each edge of the series-parallel graph, we first (in Section~\ref{sec:online}) define an online algorithm that determines how a sequence of requests should be directed through the network, if we did not know what the total number of requests (we provide more details regarding why the use of an online algorithm is desirable in that section). This algorithm, combined with a prediction $\hat{n}$ regarding the number of users, allows us to define a ``threshold'' on the number of users that we should expect on each edge of the network, assuming that the prediction is correct. Then (in Section~\ref{sec:parallel}), as a warm-up, we start by focusing on cost-sharing for the special case of parallel-link graphs. We design a cost-sharing mechanism that applies a penalty if the number of users on an edge exceeds that threshold. Then (in Section~\ref{sec:SeriesParallel}), we extend this cost-sharing idea to the more demanding class of instances involving general series-parallel graphs.

\paragraph{Series-Parallel Graphs} 
A \emph{series-parallel graph} (SPG) is constructed by performing a (not necessarily unique) sequence of \emph{series} and \emph{parallel} compositions of smaller SPGs, 
starting from the basic SPG, which is a single edge $(s,t)$. We refer to $s$ as the source and $t$ as the sink.
\begin{itemize}
    \item Given two SPGs $C_1$ and $C_2$ with sources $s_1$, $s_2$, and sinks $t_1, t_2$, we form a new SPG $C$ by merging $s_1$ and $s_2$ into one source $s$, and merging $t_1$ and $t_2$ into a new sink, $t$. This is known as the \emph{parallel composition} of $C_1$ and $C_2$.
    
    \item Given two SPGs $C_1$ and $C_2$ with sources $s_1$, $s_2$, and sinks $t_1, t_2$, we form a new SPG $C$ by merging $t_1$ and $s_2$ and letting $s=s_1$ be the new source and $t = t_2$ be the new sink. This is known as the \emph{series composition} of $C_1$ and $C_2$.
\end{itemize}

\subsection{Online Algorithm}\label{sec:online}

In this section we present an online algorithm for sequentially and myopically allocating players to paths of a series-parallel network, connecting the source to the sink. This algorithm is a crucial component for the design of our cost-sharing mechanisms both for scheduling games (parallel-link) and general series-parallel graphs. Specifically, this centralized algorithm is used as a guide regarding the outcome that our decentralized cost-sharing mechanisms aim to implement as a Nash equilibrium.
We first provide some intuition regarding the benefits of using an online algorithm as a guide, rather than directly aiming for the optimal solution.

\paragraph{The Benefits of Using an Online Algorithm}
Note that, if we trust that the prediction $\hat{n}$ will always be accurate, we can easily enforce an optimal outcome: since the mechanism knows the graph and the cost functions, it can compute the optimal strategy profile when $\hat{n}$ players are in the system, which we denote by $\OPT(\hat{n})$, and this strategy profile would determine how many players should be using each edge of the graph. If we let $\ell^*_e$ be the predicted optimal load assigned to each edge $e$, then the cost-sharing mechanism could penalize any player exceeding the $\ell^*_e$ threshold on any edge $e$ by charging them an arbitrarily large cost. As a result, if the prediction is correct, the optimal strategy profile would be the only equilibrium, leading to a PoA of 1. 

However, what if the prediction was actually inaccurate and the actual number of players $n$ is less than the predicted number of players $\hat{n}$?
In that case, the resulting Nash equilibria of the mechanism described above can be arbitrarily bad. 
For example, consider the multigraph of Figure \ref{fig:multigraph} with two vertices $s$ and $t$ and two parallel edges, connecting these two vertices. Given three costants $a_1 \ll a_2 \ll a_3$, let the cost function of the top edge be $c_1(\ell)=a_1$ if $\ell\leq \hat{n}-1$ and $c_1(\ell)=a_3$ if $\ell\geq \hat{n}$, and the cost function of the bottom edge be $c_2(\ell)=a_2$ for any $\ell\geq 1$. Then, the optimal assignment $\OPT(\hat{n})$ assuming the total number of players is $\hat{n}$ would have all players using the bottom edge, but if the actual number of players $n$ is smaller than that, they should all use the top edge. To enforce the $\OPT(\hat{n})$ outcome and achieve optimal consistency, the aforementioned protocol would heavily penalize any players using the top edge. As a result, if $n<\hat{n}$, the players would either be restricted to using the bottom edge, or they would be forced to suffer an unnecessary penalty, both of which would lead to high PoA.

\begin{figure}[h]
    \centering
\begin{tikzpicture}
\node (One) at (-3,0) [shape=circle,draw] {$s$}; 
\node (Two) at (3,0) [shape=circle,draw] {$t$};
  \node[text width=1cm] at (0.05,1.35) 
    {$c_1(\ell)$};
    \node[text width=1cm] at (0.05,-1.35) 
    {$c_2(\ell)$};
\draw [thick] (One) to [bend right=35]  (Two) ;
\draw [thick]      (One) to [bend left=35] (Two);
\end{tikzpicture}
    \caption{A simple instance exhibiting the benefits of using an online algorithm as a guide.}
    \label{fig:multigraph}
\end{figure}
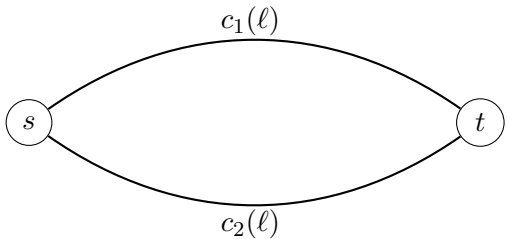

The main difficulty is that the optimal assignment $\OPT(n)$, i.e., the way in which the players are allocated to paths in this assignment, can change a lot even with small changes in $n$. As a result, a mechanism that tries to enforce the optimal assignment is bound to be very sensitive to prediction errors. To overcome this obstacle, we follow the approach of \citet{GPS21} and, rather than aiming to enforce the optimal outcome, we instead aim to enforce the outcome of an online algorithm which may be suboptimal, but is much more ``well-behaved''. Specifically, the outcome of an online algorithm does not change radically as a function of the number $n$ of arriving users: an online algorithm decides how to assign each arriving user irrevocably, without knowing how many other users would arrive in the future. Therefore, if $\ell_e(\hat{n})$ is the load that the online algorithm would assign to each edge $e$ if $\hat{n}$ players arrive, then $\ell_e(n)\leq \ell_e(\hat{n})$ for all $n\leq \hat{n}$. This way, if the actual number of players is less than what was predicted, restricting these users to a load of no more than $\ell_e(\hat{n})$ for each edge $e$ would still permit a relatively efficient assignment, namely the assignment with load $\ell_e(n)$ on each edge.

In the rest of this subsection we define an online algorithm and prove that the assignment $A(n)$ that the algorithm outputs, when $n$ users arrive, always approximates the optimal solution within a factor $4$ (Theorem~\ref{thm:onlineBound}), i.e., it has {\em competitive ratio} $\max_{G, n}\left\{c(A(n))/c(\OPT(n))\right\}=4$.

\paragraph{Online Algorithm Definition} To simplify the description of the online algorithm, without loss of generality we normalize the cost functions so that the cost of $\OPT(1)$ is 1 (this can be achieved by multiplying all cost functions by the same constant). For each $k \in \N$, let $n_k = \max\{q \in \N : c(\OPT(q)) < 2^k\}$ be the largest number of players such that the optimal social cost for assigning these players remains less than $2^k$; due to the normalization, $n_0 = \max\{q \in \N: c(\OPT(q)) = 0\}$. Using this definition, let $\ell^*_{ke}$ denote the number of players using edge $e$ in the optimal allocation when the total number of players is $n_k$. 

When the $q^{th}$ user arrives, our algorithm finds the smallest value $k$ such that for some path $p$, for all edges $e \in p$, the load so far is less than $\ell^*_{ke}$ and assigns one arbitrary of those paths to the user.
The algorithm then increments the loads $\ell_e$ for every edge $e$ on the selected path by one and moves on to the next player. A formal description is provided as Algorithm~\ref{alg:gwtf} below.

\vspace{.2cm}

\begin{algorithm}[H]\label{alg:gwtf}
$q \leftarrow 0$ \tcp*{Initialize counter for the number of players}
$\ell_e \leftarrow 0$ for each edge $e \in G$ \tcp*{Initialize all loads to zero}
\While{there exist more unassigned players}{
$q \leftarrow q +1$\\
$k_q \leftarrow \min\{k \in \N$  $|$  $\exists p \in \mathcal{P}: \ell_e < \ell^*_{ke}, \forall e \in p\}$\\
$p_q \leftarrow \mbox{ any } p \in \mathcal{P} : \ell_e < \ell^*_{k_qe}, \forall e \in p $ \tcp*{Choose any path that respects $\ell^*_{k_qe}$}
$\ell_e \leftarrow \ell_e +1$ for all $e \in p_q$ \tcp*{Assign the player to the chosen path}
}
\caption{\onlinealgo~Online Algorithm}
 \label{alg:onlinealgo}
\end{algorithm}

\vspace{.2cm}

This online algorithm, generalizes an algorithm that \citet{GPS21} introduced in the context of parallel-link graphs. The main difficulty in generalizing this algorithm beyond parallel-link graphs is the fact that in series-parallel graphs each path may need to use multiple edges to connect the source to the sink. Therefore, the feasibility of allocating the paths in an online fashion becomes significantly more complicated. In fact, as we discuss in Section~\ref{sec:conclusion}, using the same approach on a ``Braess paradox'' graph (which is just slightly more complicated than a series-paralel graph) can run into trouble. In the rest of this subsection, we first prove Lemma~\ref{lem:feasibility}, which allows us to verify the feasibility of the algorithm (Corollary~\ref{cor:kValueOfq}), and then we prove that the algorithm's competitive ratio is $4$ (Theorem~\ref{thm:onlineBound}).

\begin{lemma}\label{lem:feasibility}
Given a series-parallel graph and any $k$ and $k'$ (where $k\geq k'\geq 1$), let $O$ and $A'$ be any two assignments of $k$ and $k'$ players, respectively, to paths from $s$ to $t$ in the graph. Then, there always exists an allocation $A$ of $k$ players such that for any edge $e$,
\begin{itemize}
    \item $\ell(A_e)\geq \ell(A'_e)$ and
    \item $\ell(A_e) \leq \max\{\ell(A'_e), \ell(O_e)\}$ for any edge $e$,
\end{itemize}
where $\ell(A'_e)$, $\ell(O_e)$, and $\ell(A_e)$ are the number of players that the allocations of $A'$, $O$ and $A$, respectively, route via $e$. 
\end{lemma}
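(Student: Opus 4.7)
The plan is to induct on the recursive construction of the series-parallel graph $G$, proving a slightly more general version that allows $k' = 0$ as well. The base case is a single edge $(s,t)$: set $A$ to route all $k$ players on this edge, which satisfies $\ell(A_e) = k \geq k' = \ell(A'_e)$ and $\ell(A_e) = k = \ell(O_e) \leq \max\{\ell(A'_e), \ell(O_e)\}$.

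For the series composition $G = C_1 \cdot C_2$ through the merged vertex $m = t_1 = s_2$, every $s$-$t$ path factors uniquely as a $C_1$-part from $s_1$ to $m$ followed by a $C_2$-part from $m$ to $t_2$. Consequently both $O$ and $A'$ restrict to routings inside each subgraph that carry the same number of players ($k$ in $C_1$ and $C_2$ for $O$; $k'$ in each for $A'$). I apply the inductive hypothesis separately to $(C_1, O|_{C_1}, A'|_{C_1})$ and to $(C_2, O|_{C_2}, A'|_{C_2})$, obtaining component allocations $A_1$ and $A_2$ each with $k$ players. I then stitch them into full $s$-$t$ routes via an arbitrary bijection at $m$; edge loads are inherited directly from the sub-allocations, so both inequalities carry over edge-by-edge.

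For a parallel composition $G = C_1 \parallel C_2$, let $k^O_j$ and $k^{A'}_j$ denote the number of players that $O$ and $A'$ route entirely through $C_j$, so $k^O_1 + k^O_2 = k$ and $k^{A'}_1 + k^{A'}_2 = k'$. If $k^O_j \geq k^{A'}_j$ for both $j$, I apply the inductive hypothesis inside each $C_j$ with parameters $(k^O_j, k^{A'}_j)$ and take the union, which assigns $k$ players in total. Otherwise, WLOG $k^O_1 < k^{A'}_1$, which together with $k \geq k'$ forces $k^O_2 > k^{A'}_2$ and $k^{A'}_2 \leq k - k^{A'}_1 \leq k^O_2$. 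In $C_1$ I simply set $A_1 = A'_1$, which trivially respects both bounds. In $C_2$ I select an arbitrary subset $O'_2$ of the $O$-routes through $C_2$ of size exactly $k - k^{A'}_1$, and apply the inductive hypothesis to $(C_2, O'_2, A'|_{C_2})$ to obtain $A_2$ with $k - k^{A'}_1$ players; since $\ell(O'_{2,e}) \leq \ell(O|_{C_2,e})$ for every edge, the upper bound strengthens to $\ell(A_{2,e}) \leq \max\{\ell(A'|_{C_2,e}), \ell(O|_{C_2,e})\}$. Combining $A_1$ and $A_2$ gives a valid allocation of $k$ players.

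The main obstacle is precisely the parallel case in which $O$ and $A'$ distribute players to the two components in opposing ways: we cannot afford to decrease the $C_1$-mass below $k^{A'}_1$ (because of the lower-bound constraint), yet $O$ itself does not provide enough routes through $C_2$ to accommodate the displaced players if naively doubled. The trick of recursing on a \emph{subset} $O'_2$ of $O$'s $C_2$-routes isolates exactly the right number of additional paths while monotonically preserving the upper envelope $\max\{\ell(A'_e), \ell(O_e)\}$, which is what allows the induction to close cleanly.
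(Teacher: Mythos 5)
Your proof is correct and follows essentially the same route as the paper's: an induction over the series-parallel decomposition in which the series case is immediate and the parallel case reduces to splitting the $k$ players so that each side's count lies between $\ell(A'_{C_j})$ and $\max\{\ell(A'_{C_j}),\ell(O_{C_j})\}$. The paper establishes such a split abstractly (the lower bounds sum to at most $k$ and the upper bounds to at least $k$), whereas you exhibit one explicitly via a case analysis and the subset $O'_2$ --- a cosmetic difference, though your bottom-up construction has the minor virtue of producing the path allocation explicitly rather than only the per-component loads.
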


\begin{proof}
Consider $C$ be any connected component of the original graph that forms a series-parallel subgraph, and let $\ell(A'_C)$, $\ell(O_C)$, and $\ell(A_C)$ be the number of players that the allocations of $A'$, $O$, and $A$, respectively, route through $C$ (i.e., the total number of players whose assigned paths contain a sub-path connecting the source of $C$ to the sink of $C$). We will define $\ell(A_C)$ for every component $C$ (and hence for all edges) iteratively starting from the series parallel graph $G$ and following its decomposition. For $G$, we define $\ell(A_G)=\ell(O_G)=k \geq k' = \ell(A'_G)$. Suppose now that the lemma's statements are true for some component $C$, i.e., $\ell(A_C)\geq \ell(A'_C)$ and $\ell(A_C) \leq \max\{\ell(A'_C), \ell(O_C)\}$, and let $C_1$ and $C_2$ be the two components whose composition led to $C$. 
  \begin{itemize}
    \item If $C$ is constructed by the series composition of $C_1$ and $C_2$, we set $\ell(A_{C_1})=\ell(A_{C_2})=\ell(A_C)$. It is a property of series-parallel graphs that if a path passes through $C$ it should pass through both $C_1$ and $C_2$. Therefore, $\ell(A'_{C_1})=\ell(A'_{C_2})=\ell(A'_C)$ and $\ell(O_{C_1})=\ell(O_{C_2})=\ell(O_C)$. Hence, if the two statements were true for $C$, they should also be true for $C_1$ and $C_2$.
    
    \item If $C$ is constructed by the parallel composition of $C_1$ and $C_2$, $\ell(A_{C_1})$ and $\ell(A_{C_2})$ are chosen such that $$\ell(A'_{C_1})\leq \ell(A_{C_1})\leq \max\{\ell(A'_{C_1}), \ell(O_{C_1})\},$$ $$\ell(A'_{C_2})\leq \ell(A_{C_2})\leq \max\{\ell(A'_{C_2}), \ell(O_{C_2})\},$$  $$\ell(A_{C_1})+\ell(A_{C_2})=\ell(A_{C}).$$
        To verify that such values exist, we first clarify that it is a property of series-parallel graphs that if a path passes through $C$ it should pass through either $C_1$ or $C_2$. Therefore, $\ell(A'_{C})=\ell(A'_{C_1})+\ell(A'_{C_2})$ and $\ell(O_{C})=\ell(O_{C_1})+\ell(O_{C_2})$. 
    Then, note that $$\ell(A'_{C_1})+\ell(A'_{C_2})= \ell(A'_{C}) \leq \ell(A_{C}),$$ by the lemma's first statement on $C$. Additionally, 
    \begin{eqnarray*}
    \max\{\ell(A'_{C_1}), \ell(O_{C_1})\} + \max\{\ell(A'_{C_2}), \ell(O_{C_2})\} &\geq& \max\{\ell(A'_{C_1})+\ell(A'_{C_2}), \ell(O_{C_1})+\ell(O_{C_2})\}\\
    &= &\max\{\ell(A'_{C}), \ell(O_{C})\} \\
    &\geq& \ell(A_{C})\,,
    \end{eqnarray*}
    by the lemma's second statement on $C$. So, there exist values $\ell(A_{C_1})$ and $\ell(A_{C_1})$ that are between the extreme values $\ell(A'_{C_1}),\max\{\ell(A'_{C_1}), \ell(O_{C_1})\}$ and $\ell(A'_{C_2}), \max\{\ell(A'_{C_2}), \ell(O_{C_2})\}$, respectively that sum up to $\ell(A_{C})$.\qedhere.
\end{itemize}
\end{proof}

We give the following corollary of Lemma~\ref{lem:feasibility} with respect to \onlinealgo. We then use it to bound the competitive ratio of \onlinealgo.

\begin{corollary}\label{cor:kValueOfq}
For any $q\leq n_k$, the $k_q$ value computed by \onlinealgo~is at most $k$.
\end{corollary}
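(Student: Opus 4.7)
The plan is to use Lemma~\ref{lem:feasibility} to produce, at every step $q$, a path $p$ from $s$ to $t$ on whose edges the current online load is strictly below the threshold $\ell^*_{ke}$. Set $A'$ to be the partial allocation built by \onlinealgo~just before the $q^{\text{th}}$ player arrives, so $A'$ has $q-1$ players. Let $O$ be any sub-allocation of the optimal for $n_k$ players that keeps exactly $q$ of its routed paths; this is well-defined because $q \leq n_k$, and by construction $\ell(O_e) \leq \ell^*_{ke}$ on every edge $e$. Applying Lemma~\ref{lem:feasibility} to this pair yields an allocation $A$ of $q$ players with $\ell(A'_e) \leq \ell(A_e) \leq \max\{\ell(A'_e),\,\ell^*_{ke}\}$ on every edge $e$.

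The next step is to turn these load inequalities into a concrete path. Each of $A$ and $A'$ is a non-negative integer flow from $s$ to $t$ (of value $q$ and $q-1$, respectively), so $A - A'$ is a non-negative integer flow from $s$ to $t$ of value $1$. By integral flow decomposition, combined with the fact that an SPG is acyclic under its natural orientation, this excess contains a path $p$ from $s$ to $t$ on every edge of which $\ell(A_e) > \ell(A'_e)$. On each such edge the upper bound above forces $\ell^*_{ke} > \ell(A'_e)$, so $p$ is feasible in the rule that defines $k_q$ when the candidate index is $k$, giving $k_q \leq k$. The corner case $q = 1$, not covered by the hypothesis $k' \geq 1$ in Lemma~\ref{lem:feasibility}, is immediate: all loads start at $0$, and since $n_k \geq 1$ the optimal allocation for $n_k$ routes at least one path, all of whose edges $e$ satisfy $\ell^*_{ke} \geq 1 > 0$.

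The main step to get right is exactly the passage from the aggregate load bounds of Lemma~\ref{lem:feasibility} to an explicit path with strict slack on every edge. The integral flow decomposition of $A - A'$ is the cleanest way to do this, because the lemma only asserts the existence of a compatible load profile for $A$ and does not single out which concrete path should host the newly arriving player; it is the acyclicity of series-parallel graphs that makes the extracted single path unambiguous.
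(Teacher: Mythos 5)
Your proof is correct and follows essentially the same route as the paper's: apply Lemma~\ref{lem:feasibility} with the online algorithm's partial allocation as $A'$ and (loads bounded by) the optimum for $n_k$ players as $O$, then extract from $A-A'$ a path on every edge of which the current load is strictly below $\ell^*_{ke}$. The only differences are cosmetic --- you truncate $O$ to $q$ paths where the paper keeps all $n_k$, you justify the path extraction explicitly via integral flow decomposition where the paper simply asserts it, and you handle the $q=1$ corner case (where $k'=0$ falls outside the lemma's hypothesis) that the paper glosses over.
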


\begin{proof}
 Let $A'$ be the allocation of \onlinealgo~on $q-1$ players and and $O$ be the optimum allocation for $n_k$ players.  Lemma~\ref{lem:feasibility} states there exist an allocation $A$ of $n_k\geq q$ players with $\ell(A_{e'})\geq \ell(A'_{e'})$ for all edges $e'$. This means that in $A$ there exist $n_k - (q-1) \geq 1$ paths for which each edge $e$ has load $\ell(A_e)> \ell(A'_e)$. By Lemma~\ref{lem:feasibility} $\ell(A_e) \leq \max\{\ell(A'_e), \ell(O_e)\} = \ell(O_e)$. Note that it cannot be that $\max\{\ell(A'_e), \ell(O_e)\} = \ell(A'_e)$ because this would mean that $\ell(A_e)= \ell(A'_e)$. 
 Overall, we have that, given allocation $A'$, there exists at least one path $p$ such that $\ell(A'_{e}) < \ell(O_e)$ for all $e \in p$, resulting in $k_q\leq k$.
\end{proof}

\begin{theorem}
\label{thm:onlineBound}
The \onlinealgo~online algorithm for allocating flows over series parallel graphs guarantees a competitive ratio of $4$.
\end{theorem}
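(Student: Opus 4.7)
Fix an instance and a number $n$ of arrivals. The plan is to compare the cost of the online allocation $A(n)$ edge by edge to the cost of a sequence of optimal allocations $\OPT(n_k)$, and then use the dyadic cost levels $2^k$ together with Corollary~\ref{cor:kValueOfq} to telescope. First I would define $k^\star$ as the smallest integer with $n \leq n_{k^\star}$, so by the definition of $n_{k^\star-1}$ we have the lower bound $c(\OPT(n)) \geq 2^{k^\star - 1}$ (if $k^\star = 0$ the ratio is trivial since both costs are $0$ or the comparison is vacuous).

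Next I would control the final load on each edge using the algorithm's greedy rule. For any edge $e$ with $\ell(A_e) \geq 1$, let $q^\star_e$ be the last player to be routed through $e$, and set $K_e := k_{q^\star_e}$. By the choice of path in line~5 of Algorithm~\ref{alg:gwtf}, the load on $e$ just before player $q^\star_e$ arrived was strictly less than $\ell^\star_{K_e e}$, hence after the assignment we have $\ell(A_e) \leq \ell^\star_{K_e e}$. Since cost functions are non-decreasing, this gives the edge-wise bound $c_e(\ell(A_e)) \leq c_e(\ell^\star_{K_e e})$. Moreover, since $q^\star_e \leq n \leq n_{k^\star}$, Corollary~\ref{cor:kValueOfq} yields $K_e \leq k^\star$ for every used edge $e$.

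The remaining step is to aggregate over edges by grouping them according to their $K_e$ value. Partitioning the sum:
\begin{equation*}
c(A(n)) \;=\; \sum_e c_e(\ell(A_e)) \;\leq\; \sum_{k=0}^{k^\star}\;\sum_{e:\,K_e = k} c_e(\ell^\star_{k e}) \;\leq\; \sum_{k=0}^{k^\star} \sum_e c_e(\ell^\star_{k e}) \;=\; \sum_{k=0}^{k^\star} c(\OPT(n_k)).
\end{equation*}
By the definition of $n_k$ each term satisfies $c(\OPT(n_k)) < 2^k$, so the geometric series gives $c(A(n)) < \sum_{k=0}^{k^\star} 2^k < 2^{k^\star + 1}$. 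Combining with the lower bound on $c(\OPT(n))$ yields
\begin{equation*}
\frac{c(A(n))}{c(\OPT(n))} \;<\; \frac{2^{k^\star + 1}}{2^{k^\star - 1}} \;=\; 4,
\end{equation*}
which is the claimed competitive ratio.

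The main obstacle, in my view, is not the arithmetic but justifying step two cleanly: one has to resist the temptation to say $\ell(A_e) \leq \ell^\star_{k^\star e}$ directly (which need not hold, since the optimal allocations for different $n_k$ can have very different edge loads). Using the per-edge witness $K_e$ and then grouping edges by $K_e$ before invoking the definition of $n_k$ is what makes the telescoping work; the Corollary~\ref{cor:kValueOfq} is essential precisely to cap the index range at $k^\star$, which is in turn what is needed to convert the lower bound $c(\OPT(n)) \geq 2^{k^\star - 1}$ into a constant competitive ratio.
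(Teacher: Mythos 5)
Your proof is correct and takes essentially the same route as the paper's: the paper's one-line claim $c(A(n)) \leq \sum_{i=1}^{k}\OPT(n_i) < 2^{k+1}$ is precisely your per-edge accounting via the witnesses $K_e$ combined with Corollary~\ref{cor:kValueOfq}, followed by the same dyadic telescoping against the lower bound $c(\OPT(n))\geq 2^{k-1}$. The only difference is that you make explicit the edge-grouping step that the paper leaves implicit.
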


\begin{proof}
Let $n$ be the actual number of players, and let $k \in \N$ be the minimum value such that $n \leq n_k$. By Corollary~\ref{cor:kValueOfq}, we have that $$c(A(n)) \leq \sum_{i=1}^k \OPT(n_i) < 2^{k+1},$$
while the optimal cost is $\OPT(n) \geq 2^{k-1}$, otherwise $k$ wouldn't be minimum, leading to a competitive ratio of less than $2^{k+1}/2^{k-1} = 4$.
 \end{proof}

\subsection{Cost-Sharing Mechanism with Predictions for Games with Parallel-link}\label{sec:parallel}

We first analyze the class of parallel-link graphs as a warm-up. In a parallel-link graph, there are two vertices $s$ and $t$ and multiple parallel edges connecting them. Each player needs to pick one of these edges and a cost-sharing mechanism determines how the cost induced on each edge is to be shared among the players that use it. In deciding how to share that cost, a resource-aware mechanism knows the cost function of every edge in the graph, but not the actual number of users $n$. In this subsection, we propose a resource-aware mechanism that is enhanced with a prediction $\hat{n}$ regarding the total number of users, which allows us to achieve stronger PoA bounds.

Our cost-sharing mechanism uses the prediction $
\hat{n}$, along with the \onlinealgo~algorithm, to determine the cost that each user is responsible for. It first simulates what the \onlinealgo~algorithm would do if the number of users is, indeed, $\hat{n}$, which gives rise to the assignment $A(\hat{n})$ with a load of $\hat{\ell}_e$ on each edge $e$. Then, in any strategy profile where the number of users on an edge $e$ exceeds $\hat{\ell}_e$, the mechanism penalizes these users, so that this strategy profile will not be an equilibrium when the prediction is correct, i.e., $n=\hat{n}$. Specifically, if the prediction is correct, then there must exist another edge $e'$ whose load in this strategy profile is at most $\hat{\ell}_{e'}-1$, which means that any penalized user can unilaterally deviate to this edge and avoid the penalty.

\paragraph{Cost-Sharing Mechanism}

Let $\hat{\ell}_e$ be the load assigned to edge $e$ in the allocation $A(\hat{n})$ and let $W = c(A(\hat{n}))+\epsilon$ for some arbitrarily small $\epsilon >0$. To capture the penalties that the mechanism imposes, we define the new cost functions $\newcost$ for each edge $e$ as follows:

\begin{itemize}
    \item If $\ell_e \leq \hat{\ell}_e$ (no penalties case), then $\newcost_e(\ell_e) = c_e(\ell_e)$.
    \item If $\ell_e > \hat{\ell}_e$ (penalties case), then $\newcost_e(\ell_e) = \max\{\newcost_e(\ell_e-1) + W,~ c_e(\ell_e) )\}$.
\end{itemize}
Then, given this adjusted cost function $\newcost$ and an arbitrary ordering $\pi$ over the users, we use the ordered protocol (see Section~\ref{sec:prelims}) to share the adjusted cost of each edge among its users.\footnote{For this we may consider a global ordering $\ordering$ over the universe $\mathcal{N}$ of players. For any instance of the game with a set of players $N \in \mathcal{N}$, this global ordering implies an ordering over the players in $N$ that are actually participating. Note that this ordering $\pi$ can also be chosen at random, or it can be updated periodically to ensure a fair treatment of the players ex-ante, or in the long run, respectively.}
Note that for any edge $e$, if $\ell_e \leq \hat{\ell}_e$, the protocol is budget-balanced. On the other hand, if $\ell_e > \hat{\ell}_e$, then the last $\ell_e - \hat{\ell}_e$ users of the edge (according to $\pi$) suffer a cost at least $W$ (because this is defined as the minimum marginal increase of $\newcost$ for additional load beyond $\hat{\ell}_e$).

\begin{theorem}
Our cost-sharing mechanism is stable and, given a prediction with error $\err = |\hat{n}-n|$, it guarantees a price of anarchy of at most $\min\{4(\err+1),~4n\}$. 
\end{theorem}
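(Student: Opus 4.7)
My plan is to prove stability separately from the PoA bound, and to split the PoA analysis by the sign of $n - \hat{n}$.

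Stability is immediate: the mechanism is an ordered cost-sharing protocol (with the fixed priority $\pi$) applied to the non-decreasing adjusted costs $\hat{c}_e$, and such games always admit a PNE by letting players best-respond one at a time in the order of $\pi$. Since a later player in $\pi$ cannot affect the position, and therefore the cost share, of any earlier player on any edge, each best response is well-defined given the earlier commitments and the one-pass profile it produces is a PNE.

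The technical core is a structural lemma on PNE load profiles. I claim that no PNE $\strategy$ can simultaneously have an overloaded edge ($\ell_e > \hat{\ell}_e$) and an underloaded edge ($\ell_{e'} < \hat{\ell}_{e'}$). Suppose both existed, and let $i$ be the user of an overloaded edge $e^\star$ whose rank in $\pi$ is latest among all users of overloaded edges; then $i$ is in particular the last user of $e^\star$ in $\pi$ (any later user of $e^\star$ would also use an overloaded edge), so their marginal cost on $e^\star$ is at least $W$ by the penalty rule. Deviating to the underloaded edge $e'$, their load-after-deviation satisfies $\ell_{e'}+1\le\hat{\ell}_{e'}$, placing them in the no-penalty regime, so their deviation cost is at most $c_{e'}(\hat{\ell}_{e'})\le c(A(\hat{n}))<W$, contradicting the PNE condition. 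Combined with the load-sum identities $\sum_e\ell_e=n$ and $\sum_e\hat{\ell}_e=\hat{n}$, this yields $\ell_e(\strategy)\le\hat{\ell}_e$ for every edge when $n\le\hat{n}$, $\ell_e(\strategy)\ge\hat{\ell}_e$ for every edge when $n\ge\hat{n}$, and in either direction $\sum_e|\ell_e-\hat{\ell}_e|=\delta$.

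With the lemma in hand, in the over-prediction case ($n\le\hat{n}$) no player is penalized, so $\hat{c}(\strategy)=c(\strategy)\le\sum_e c_e(\hat{\ell}_e)=c(A(\hat{n}))$, which Theorem~\ref{thm:onlineBound} bounds by $4\OPT(\hat{n})$. In the under-prediction case ($n>\hat{n}$), a short induction on the recurrence defining $\hat{c}_e$ gives $\hat{c}_e(\ell_e)\le c_e(\ell_e)+(\ell_e-\hat{\ell}_e)W$ for $\ell_e\ge\hat{\ell}_e$, hence $\hat{c}(\strategy)\le c(\strategy)+\delta W$ with $W=c(A(\hat{n}))+\epsilon\le 4\OPT(n)+\epsilon$. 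I then sharpen these two estimates into the unified bound $\hat{c}(\strategy)\le 4(\delta+1)\OPT(n)$ using the levelled structure $\{n_k\}$ of the \onlinealgo{} algorithm together with Corollary~\ref{cor:kValueOfq}: this controls how much $\OPT(\hat{n})$ can exceed $\OPT(n)$ (over-prediction) and the unpenalized component $c(\strategy)$ can exceed $\OPT(n)$ (under-prediction) as a function of $\delta$. Finally, the $4n$ robustness bound follows from the standard deviation argument for ordered protocols: each of the $n$ players can deviate to their edge in $\OPT(n)$, whose adjusted-cost contribution is at most $4\OPT(n)$, yielding $\hat{c}(\strategy)\le 4n\cdot\OPT(n)$. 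Taking the minimum of the two bounds gives $\min\{4(\delta+1),4n\}$.

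The hard part will be the sharpening step: matching the prediction error $\delta$ (a count of excess or deficit players) to the multiplicative growth $\OPT(\hat{n})/\OPT(n)$ in the over-prediction case, and simultaneously controlling the unpenalized term $c(\strategy)$ in the under-prediction case, so that both cases collapse into the same clean $4(\delta+1)\OPT(n)$ constant. The remaining steps (stability, the structural lemma, and the per-case first-order inequalities) are essentially forced by the construction of $\hat{c}_e$ and the competitive analysis of the \onlinealgo{} algorithm.
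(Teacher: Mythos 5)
Your stability argument and your structural lemma (an overloaded and an underloaded edge cannot coexist in a PNE, hence $\ell_e\le\hat{\ell}_e$ everywhere when $n\le\hat{n}$ and $\ell_e\ge\hat{\ell}_e$ everywhere when $n\ge\hat{n}$) are both correct and match the paper. The gap is in the over-prediction case. Your first-order estimate there is $\hat{c}(\strategy)=c(\strategy)\le c(A(\hat{n}))\le 4\,c(\OPT(\hat{n}))$, and your plan is to ``sharpen'' this by arguing that the level structure $\{n_k\}$ controls $c(\OPT(\hat{n}))/c(\OPT(n))$ as a function of $\err$. No such control exists. Take two parallel links with $c_1(\ell)=1$ for $\ell\le n$ and $c_1(\ell)=M$ for $\ell>n$, and $c_2(\ell)=M$ for all $\ell\ge 1$, with $M$ arbitrarily large. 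Then $c(\OPT(n))=1$ while $c(\OPT(n+1))\ge M$, so for $\hat{n}=n+1$ (i.e.\ $\err=1$) both $c(\OPT(\hat{n}))$ and $c(A(\hat{n}))$ are $\Omega(M)$, whereas the claimed bound is $4(\err+1)\cdot c(\OPT(n))=8$. A single extra predicted player can jump arbitrarily many levels $n_k$, so the intermediate quantity your chain of inequalities passes through is unboundedly larger than the target, and no sharpening can recover it. (The actual PNE in this instance does have cost $O(1)$; it is $c(A(\hat{n}))$ that is the wrong yardstick.)

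The missing idea is to compare the equilibrium to $A(n)$ rather than to $A(\hat{n})$. By the online algorithm's monotonicity, $\ell_e(A(n))\le\hat{\ell}_e$ for every $e$ when $n\le\hat{n}$, so $A(n)$ is feasible under the thresholds; combined with your structural lemma and $\sum_e\hat{\ell}_e-\sum_e\ell_e(A(n))=\hat{n}-n$, at most $\min\{\err,n\}$ players occupy slots beyond $A(n)$'s loads. Each such player pays at most $c(A(n))$: since $\sum_e\ell_e=n=\sum_e\ell_e(A(n))$, some edge $e'$ has $\ell_{e'}<\ell_{e'}(A(n))\le\hat{\ell}_{e'}$, and deviating there incurs no penalty and costs at most $c_{e'}(\ell_{e'}(A(n)))\le c(A(n))$. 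This yields $(\err+1)\cdot c(A(n))\le 4(\err+1)\cdot c(\OPT(n))$ directly, and the same per-player bound gives the $4n$ robustness. Your version of the $4n$ bound (deviate to ``their edge in $\OPT(n)$'') needs the same repair: that edge may already sit at its threshold and trigger a penalty of $W$, which in the over-prediction case is not $O(c(\OPT(n)))$. Your under-prediction sketch has a related, milder issue: the estimate $\hat{c}(\strategy)\le c(\strategy)+\err W$ is valid but leaves $c(\strategy)$ uncontrolled, and bounding it requires exactly the same deviation argument against $A(n)$. In all three places, the per-player deviation bound of $c(A(n))$ is the step that actually carries the proof, and it is absent from your outline.
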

\begin{proof}
We will show that for any PNE $\strategy$ in the game induced by our cost-sharing mechanism, it holds that:
$$\sum_{i \in N} \xi_i(\strategy) \leq 4(\err+1) \cdot c(\OPT(n)) \quad \text{and} \quad \sum_{i \in N} \xi_i(\strategy) \leq 4n \cdot c(\OPT(n))\, . $$
First, note that the stability of the mechanism, i.e., the PNE existence guarantee, is directly implied by the fact that we use an ordered protocol for sharing the modified cost \citep{GMW14}. We now proceed to prove the price of anarchy bound by considering two cases, depending on whether the predicted number of players is higher than the ones that actually arrived (overprediction) or lower (underprediction). 
In what follows, $\ell_e$ is the load of edge $e$ under $\strategy$.

\paragraph{Overprediction case: $\hat{n}> n$.} We first show that in $\strategy$, $\ell_e \leq \hat{\ell}_e$ for any edge $e$ (recall that $\hat{\ell}_e$ is the load assigned to edge $e$ in the allocation $A(\hat{n})$). Assume for contradiction that for some edge $e$, we have $\ell_e > \hat{\ell}_e$. This implies that there exists an edge $e'$ where $\ell_{e'} < \hat{\ell}_{e'}$, since $\hat{n} > n$. Then consider the highest ranked player $i$ on $e$, her payment is at least $W = c(A(\hat{n}))+\epsilon$. However, she can decrease her cost by moving to edge $e'$, where her charge is at most
$$c_e(\ell_e + 1) \leq c_e(\hat{\ell}_e) \leq c(A(\hat{n})) < W.$$ Since such unilateral deviation exists, the outcome can not be a PNE, which contradicts our assumption.

Now suppose that in $\strategy$ some player $i$ chooses an edge $e$ such that $\ell_e > \ell_e(A(n))$. Note that $\xi_i(\strategy) \leq c(A(n))$, otherwise she can reduce her cost by moving to another edge $e'$ such that $\ell_{e'} < \ell_{e'}(A(n))$  (there is at least one, since at least one user deviated from $A(n)$). Now since in any PNE $\ell_e \leq \hat{\ell}_e$, for any edge $e$, there are at most $\min\{\err,n\}$ deviations from $A(n)$. Therefore the total cost in $\strategy$ is,
\[\sum_{i \in N} \xi_i(\strategy) \leq c(A(n)) + \sum_{i: p_i \notin A(n)} \xi_i(\strategy) \leq (\err+1) \cdot c(A(n)) \leq 4(\err+1) \cdot c(\OPT(n)).\]

\paragraph{Underprediction case: $\hat{n} < n$.} Analogously, we first show that in $\strategy$, $A(\hat{n})$ is fully utilized, i.e., $\ell_e \geq \hat{\ell}_e$ for all $e$. Assume for contradiction, that for some edge $e$, we have $\ell_e < \hat{\ell}_e$. This implies that there is an edge $e'$ such that $\ell_{e'} > \hat{\ell}_{e'}$, since $n > \hat{n}$. Then consider the lowest ranked player $i$ on $e'$, her current payment is at least $W$, whereas if she were to move to edge $e$, her new cost would be $$c_e(\ell_e+1) \leq c_e(\hat{\ell}_e) \leq c(A(\hat{n})) < W.$$ Since such unilateral deviation exist, the outcome again cannot be a PNE which is a contradiction. 

We now consider the cost of any player $i$ with an edge $e$ such that $\ell_e > \ell_e(A(\hat{n}))$. Since $A(\hat{n})$ is fully used, we know that there are exactly $\err$ such users. Consider any such user $i$ and the edge they pick $e$. If $\ell_{e} \leq \ell_{e}(A(n))$, we have $\xi_i(\strategy) \leq \max(W, c(A(n))) \leq c(A(n))+\epsilon$. If $\ell_{e} > \ell_{e}(A(n))$, then there exist another edge $e'$ such that $\ell_{e'} < \ell_{e'}(A(n))$, since $\sum_{e \in E}\ell_{e}(A(n)) = n$. Then, we also have $\xi_i(\strategy) \leq \max(W, c(A(n))) \leq  c(A(n)) + \epsilon$, otherwise player $i$ can deviate to $e'$ and pay at most $\max(W, c(A(n)))$. Since we can make $\epsilon$ arbitrarily small, for notational simplicity we henceforth assume that $\xi_i(\strategy)\leq c(A(n))$. Therefore the total cost of any PNE is,
\[\sum_{i \in N} \xi_i(\strategy) \leq c(A(\hat{n})) + \sum_{i, p_i \notin A(\hat{n})} \xi_i(\strategy) \leq (\err+1) \cdot c(A(n)) \leq 4(\err+1) \cdot c(\OPT(n))\,,\]
where the last inequality comes from Theorem~\ref{thm:onlineBound}.

At the same time, since there are only $n$ users and we showed that each user pays less than $c(A(n))$ for any strategy she picks, we also have:
\[\sum_{i \in N} \xi_i(\strategy) \leq n \cdot c(A(n)) \leq 4n \cdot c(\OPT(n)).\qedhere\]
\end{proof}

\subsection{Cost-Sharing Mechanism with Predictions for Series-Parallel Networks}\label{sec:SeriesParallel}

Now, using the intuition developed in the more tractable case of parallel-link graphs, we extend our results to general series-parallel graphs. In doing so, we need to overcome a few non-trivial obstacles. For example, unlike the parallel-link case, the strategies of the users are now not singleton, i.e., a path may contain more than one edges. As a result, if we apply a penalty on all of the edges on some user's path, this can accumulate, leading to PoA bounds that are proportional to the length of the paths. To avoid this issue, we carefully use the structure of the graph and the flows allocated by the \onlinealgo, and ensure that the penalties are distributed across the network.

\paragraph{Cost-Sharing Mechanism} Using the \onlinealgo~algorithm, we derive the allocation $A(\hat{n})$ for $\hat{n}$ players. Let $c(A_C(\hat{n}))$ denote the cost of the allocation $A(\hat{n})$ restricted to a connected component $C$ of the graph that forms a series-parallel subgraph. We first define a constant $W_C$ for each component $C$ iteratively by following the decomposition of graph $G$. To start, we define $W_G=c(A_G(\hat{n}))+\varepsilon$ for an arbitrarily small value $\varepsilon>0$. Suppose now that for some component $C$, $W_C>c(A_C(\hat{n}))$. If $C$ is constructed by the parallel composition of $C_1$ and $C_2$, then we define $W_{C_1}=W_{C_2}=W_C$. And if $C$ is constructed by the series composition of $C_1$ and $C_2$, then we define $W_{C_i} =  W_C \cdot c(A_{C_i}(\hat{n}))/c(A_C(\hat{n}))$, for $i\in \{1,2\}$. The constant $W_C$ for each component $C$ satisfies the following two properties:
\begin{enumerate}
    \item $W_C > c(A_C(\hat{n}))$.
    \item For any path $p$ connecting the endpoints $s_C, t_C$ of $C$, $\sum_{e\in p}W_e = W_C$.
\end{enumerate}

We now update the cost functions with $W_C$. Let $\hat{\ell}_C$ be the number of players using component $C$ under allocation $A(\hat{n})$ (i.e., the number of players whose path contains a sub-path connecting the source of $C$ to the sink of $C$), where $\hat{n}$ is the prediction. Each edge $e$ is a component, so this also defined the edge load $\hat{\ell}_e$. We define the new cost functions $\hat{c}_e$ for each edge $e$ as follows:
\begin{itemize}
    \item if $\ell_e \leq \hat{\ell}_e$, $\hat{c}_e(\ell_e) = c_e(\ell_e)$
    \item if $\ell_e > \hat{\ell}_e$, $\hat{c}_e(\ell_e) = \max(\hat{c}_e(\ell_e-1) + W_e, c_e(\ell_e))$.
\end{itemize}
We then use the simple ordered protocol with the new cost function by considering an arbitrary global ordering $\pi$ of players. 

For the rest of the section we denote by $\ell_C$ the actual number of players using component $C$ under some PNE. Note that for any component $C$, if $\ell_C \leq \hat{\ell}_C$, the protocol is budget-balanced, and if $\ell_C > \hat{\ell}_C$, each extra player using $C$ pays at least $W_C$ by definition.

\begin{theorem}
Our cost-sharing mechanism for series-parallel networks is stable and, given a prediction error $\err=|n-\hat{n}|$, it guarantees a price of anarchy of at most $\min\{4(\err+1),~4n\}$. 

\end{theorem}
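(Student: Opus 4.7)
The plan is to closely follow the parallel-link proof from Section~\ref{sec:parallel}, adapting each step to the series-parallel setting and exploiting the two key properties of the penalty values $W_C$: the inequality $W_C > c(A_C(\hat{n}))$ and the path-invariant $\sum_{e \in p} W_e = W_C$ for every $s_C$-$t_C$ path in $C$. Stability is immediate: the mechanism is an ordered protocol on the modified cost functions $\hat{c}_e$, so PNE existence follows from \citet{GMW14}. As in the parallel-link case, I will split the analysis into overprediction ($\hat{n} > n$) and underprediction ($\hat{n} < n$), invoking Theorem~\ref{thm:onlineBound} at the end to convert bounds in terms of $c(A(n))$ into bounds in terms of $c(\OPT(n))$.

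The central technical step is a component-level load bound generalizing the per-edge bound from the parallel case: in any PNE, $\ell_C \leq \hat{\ell}_C$ for every component $C$ when $\hat{n} > n$, and $\ell_C \geq \hat{\ell}_C$ when $\hat{n} < n$. I will prove the overprediction direction by contradiction: locate a ``critical'' parallel ancestor, namely the lowest parallel composition $C = C_1 \cup C_2$ such that $\ell_C \leq \hat{\ell}_C$ but some descendant in (say) the $C_1$-subtree violates the bound. Such a $C$ exists because series compositions preserve both sides of the inequality between parent and children, while the root $G$ trivially satisfies $\ell_G = n \leq \hat{n} = \hat{\ell}_G$. Consequently $\ell_{C_1} > \hat{\ell}_{C_1}$ and the sibling $C_2$ has strict slack $\ell_{C_2} < \hat{\ell}_{C_2}$. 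The highest-$\pi$-ranked player traversing $C_1$ pays, on each edge $e'$ of her $C_1$-segment where the local load exceeds $\hat{\ell}_{e'}$, at least a penalty of $W_{e'}$, which by the path-invariant sums to at least $W_{C_1} = W_C$ along her path; she can reroute through a slack path in $C_2$, which exists by recursively picking a slack child at every parallel split inside $C_2$ (valid since $\ell_{C_2} < \hat{\ell}_{C_2}$), at a cost of at most $c(A_{C_2}(\hat{n})) \leq c(A_C(\hat{n})) < W_C$, contradicting the PNE. The underprediction case is symmetric.

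Once the component load bound is in hand, the overprediction PoA bound closely follows the parallel-link proof. No penalties are applied, so the mechanism is budget-balanced and $\sum_i \xi_i(\strategy) = c(\strategy)$. I then show (i) each player's cost is at most $c(A(n))$, by exhibiting a deviation to a path $p^*$ with $\ell_e < \ell_e(A(n))$ on every edge (which exists whenever the PNE differs from $A(n)$ by descending through the decomposition and picking a slack child at every parallel split, since the total loads of $\strategy$ and $A(n)$ both equal $n$ at every cut), whose cost to the deviator is at most $\sum_{e \in p^*} c_e(\ell_e(A(n))) \leq c(A(n))$; and (ii) the number of ``deviators'' from $A(n)$ is at most $\delta$, via a component-level telescoping argument using $\ell_C \leq \hat{\ell}_C$ together with the monotonicity of \onlinealgo~(which gives $\ell_C(A(n)) \leq \hat{\ell}_C$), analogously to the per-edge pigeonhole in the parallel case. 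Combining these two bounds with Theorem~\ref{thm:onlineBound} yields $c(\strategy) \leq \min\{4(\delta+1),\, 4n\}\cdot c(\OPT(n))$. The underprediction argument is dual: exactly $\delta$ ``extra'' players beyond $A(\hat{n})$ exist, each incurring penalties capped at $W_G = c(A(\hat{n})) + \varepsilon$ by the path-invariant, and each individually bounded by $c(A(n))$ via a symmetric deviation argument; the non-extras collectively cover at most $c(A(\hat{n})) \leq c(A(n))$, giving the same bound.

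I expect the main obstacle to be the component load bound, because the parallel-link argument is a direct edge-level deviation whereas the SP version must carefully locate the critical parallel ancestor and combine both properties of $W_C$ simultaneously: the path-invariant $\sum_{e \in p} W_e = W_C$ to lower-bound the penalty paid on the over-loaded branch, and $W_C > c(A_C(\hat{n}))$ to upper-bound the replacement cost on the slack branch. The existence of slack paths (needed both for the lemma and for the per-player cost bound) and the deviator counting also require recursive descents through the SP decomposition with no parallel-link analogue, but these telescope cleanly once the component-level inequality is formalized.
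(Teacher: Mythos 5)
Your overall strategy is the paper's: stability from the ordered protocol, the component-level load bounds $\ell_C \leq \hat{\ell}_C$ (overprediction) and $\ell_C \geq \hat{\ell}_C$ (underprediction) established at a critical parallel composition, slack paths obtained by recursive descent through the decomposition (the paper's Lemma~\ref{lem:existOfOverUsedPath}), and deviation arguments combined with Theorem~\ref{thm:onlineBound}. However, the pivotal step of the load bound is not justified by what you wrote. You claim that the last player traversing the overloaded branch $C_1$ pays a penalty of $W_{e'}$ on each edge of her segment where the load exceeds $\hat{\ell}_{e'}$, and that ``by the path-invariant'' these penalties sum to at least $W_{C_1}$. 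The path-invariant gives $\sum_{e\in p}W_e = W_{C_1}$ only when the sum runs over \emph{all} edges of a full path through $C_1$; nothing guarantees that the load exceeds the threshold on every edge of her path (for instance, $C_1$ could be the series composition of an overloaded edge with a parallel part in which her own branch has slack), so the penalties she actually collects can be a strict, and arbitrarily small, fraction of $W_{C_1}$. In such a configuration no single player is visibly overpaying and your contradiction does not fire. The paper closes exactly this hole with Lemma~\ref{lem:ChargesOnTheWholePath}: an induction over the decomposition that invokes the PNE property \emph{inside} the induction (via Lemma~\ref{lem:existOfOverUsedPath}) to show that, in any equilibrium, the $\ell_C-\hat{\ell}_C$ lowest-priority users of $C$ are charged $W_e$ on \emph{every} edge of their path in $C$. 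That lemma is the heart of the argument, and your sketch contains no substitute for it.

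A second, smaller under-specification is the overprediction cost bound. ``At most $\err$ deviators, each paying at most $c(A(n))$'' works for parallel links because excess is a per-edge, hence per-player, notion. In a series-parallel graph a player can be among the excess (relative to $A(n)$) on some edges of her path and not on others, and the sets of excess players on different edges need not coincide, so the number of players contributing excess marginal costs can exceed $\err$. The paper's Lemma~\ref{lem:BoundPerComponent} resolves this with the per-component invariant $c(\strategy_C)\leq \min\{\err_C+1,\ell_C\}\cdot c(A_C(n))+\sum_{i\in D_C}\xi_i(\strategy_C)$, in which the players of $D_{C_j}\setminus D_C$ at each parallel composition are bounded by a deviation to the sibling and absorbed using $d_{C_j}\leq\err_{C_j}$. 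Your ``component-level telescoping'' points in this direction, but you need to state and prove such an invariant; without it the deviator count does not go through.
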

\begin{proof}
We will show that for any PNE $\strategy$ in the game induced by our cost-sharing mechanism, it holds that:
$$\sum_{i \in N} \xi_i(\strategy) \leq 4(\err+1) \cdot c(\OPT(n)) \quad \text{and} \quad \sum_{i \in N} \xi_i(\strategy) \leq 4n \cdot c(\OPT(n))\,. $$

First, note again that the stability of the mechanism is directly implied by the fact that we use an ordered protocol for sharing the modified cost. We now proceed to prove the price of anarchy bound by considering two cases, depending on whether the predicted number of players is higher than the ones that actually arrived (overprediction) or lower (underprediction). 

\paragraph{Overprediction case: $\hat{n} \geq n$} We first show that, for all components $C$, we have $\ell_C \leq \hat{\ell}_C$. Assume for contradiction that this is not the case and following the decomposition of $G$, let $C_1$ be the first component such that $\ell_{C_1} > \hat{\ell}_{C_1}$. For this to happen, some component $C$ is constructed by the parallel composition of $C_1$ and some $C_2$, where $\ell_C \leq \hat{\ell}_C$ and in turn $\ell_{C_2} < \hat{\ell}_{C_2}$. We use the following two lemmas to show that there exist at least one player that can decrease her cost by deviating from $C_1$ to $C_2$. 

\begin{lemma}
\label{lem:existOfOverUsedPath}
Consider any two allocations $A$ and $A'$ for $k$ players. If for some component $C$ we have $\ell(A_{C})>\ell(A'_{C})$, there exists a path $p$ from the source and sink of $C$ such that $\ell(A_e)>\ell(A'_e)$ for all $e \in p$.
\end{lemma}
\begin{proof}
First note that if $C$ is a basic SPG, i.e., if $C$ is an edge, then the statement is trivially true. Now we use mathematical induction by following the construction of $C$. Suppose that the statement holds for SPG $C_1$ and $C_2$, and consider the new SPG $C$ constructed by them:
\begin{itemize}
    \item If $C$ is constructed by the series composition of $C_1$ and $C_2$, it holds that  $\ell(A_{C_1})=\ell(A_{C_2})=\ell(A_C)$ and $\ell(A'_{C_1})=\ell(A'_{C_2})=\ell(A'_C)$. If $\ell(A_C) > \ell(A'_C)$ it must be that $\ell(A_{C_1})>\ell(A'_{C_1})$ and $\ell(A_{C_2})>\ell(A'_{C_2})$. By our induction hypothesis, there exist a path from $s_1$ to $t_1$ (the source and sink, respectively, of $C_1$) and a path from $s_2$ to $t_2$ (the source and sink, respectively, of $C_2$). Furthermore, since the series composition merges vertex $t_1$ with $s_2$, we therefore have a path $p$ from $s$ to $t$ such that $\ell(A_e) > \ell(A'_e)$ for all $e \in p$. 
    \item If $C$ is constructed by the parallel composition of $C_1$ and $C_2$, it holds that $\ell(A_{C_1})+\ell(A_{C_2})=\ell(A_{C})$ and $\ell(A'_{C_1})+\ell(A'_{C_2})=\ell(A'_{C})$. If $\ell(A_C) > \ell(A'_C)$, it cannot be that both $\ell(A_{C_1})\leq \ell(A'_{C_1})$ and $\ell(A_{C_2})\leq \ell(A'_{C_2})$. W.l.o.g. $\ell(A_{C_1}) > \ell(A'_{C_1})$, and by our induction hypothesis there exist path $p$ from $s_1$ to $t_1$ such that $\ell(A_e) > \ell(A'_e)$ for all $e \in p$, which is also connecting the source and sink for the new SPG $C$ by parallel construction.\qedhere
\end{itemize}
\end{proof}

\begin{lemma}
\label{lem:ChargesOnTheWholePath}
For any component $C$, if  $ \ell_C - \hat{\ell}_C>0$, then the $\ell_C - \hat{\ell}_C$ preceding players according to the ordering $\pi$ using $C$ are charged $W_e$ for every edge $e$ that they use in $C$.
\end{lemma}
\begin{proof}
Recall that $\ell_C$ and $\hat{\ell}_C$ are the number of players using component $C$ in some PNE and under $A(\hat{n})$, respectively. 
First note that if $C$ is a basic SPG, i.e., if $C$ is an edge, then the statement is trivially true. Now we use mathematical induction by following the construction of $C$. Suppose that the statement hold for SPG $C_1$ and $C_2$ and consider the new SPG $C$ constructed by them:
\begin{itemize}
    \item If $C$ is constructed by the series composition of $C_1$ and $C_2$, then the same set of players are using both $C_1$ and $C_2$ and $\ell_{C_1}=\ell_{C_2}=\ell_C$. So the induction hypothesis holds for $C$.
    \item  If $C$ is constructed by the parallel composition of $C_1$ and $C_2$, then each player that is charged by $W_e$ for at least one edge in $C$ is charged $W_e$ for all edges he uses in $C$. Suppose that a player $i$ that is not one of the $\ell_C - \hat{\ell}_C$ lowest priority players using $C$ is charged $W_e$ for each edge of his path in $C$. Then, the number of players that have higher priority than $i$ is strictly less than $\hat{\ell}_C$ and therefore, by Lemma~\ref{lem:existOfOverUsedPath}, there exists a path $p$ in $C$ where for each edge $e \in p$ the number of those players is strictly less than $\hat{\ell}_e$. If player $i$ deviates to $p$ he would avoid any high charge $W_e$, which contradicts the fact that $\ell_C$ is the load in some PNE. So, the induction hypothesis holds for $C$ as well.\qedhere
\end{itemize}
\end{proof}

By Lemma~\ref{lem:ChargesOnTheWholePath}, there exist a player $i$ using $C_1$ that is charged at least $W_{C_1}$ for using $C_1$ (by definition of $W_{C_1}$). By Lemma~\ref{lem:existOfOverUsedPath}, there exists a path $p$ in $C_2$ such that $\ell_e < \hat{\ell}_e$ for all $e \in p$. By the definition of $W_{C}$, we have that $W_{C_1} = W_{C} > c(A_C(\hat{n})) > c(A_{C_2}(\hat{n}))$. Since player $i$ can deviate to path $p$ and pay less, such allocation is not a PNE which is a contradiction.

Having proved that in any PNE, only paths of $A(\hat{n})$ are used, we will bound the induced cost by using the property of the \onlinealgo~algorithm: that is, if $\ell^*_C$ to be the number of players using some component $C$ in $A(n)$, $\hat{\ell}_C\geq \ell^*_C$. We define $\err_C = \hat{\ell}_C-\ell^*_C \geq 0$ to be the additional number of players in the PNE that are routed via $C$ in $A(\hat{n})$ comparing to $A(n)$; note that $\err_G=\err$. We further define $d_C=\max\{\ell_C-\ell^*_C,0\}$ to be the excess of the players using $C$ compared to $A(n)$ (obviously $d_G=0$), and $D_C$ to be the set of the $d_C$ lowest ranked players using $C$ (obviously $D_G=\emptyset$). The following lemma is a key lemma in order to conclude the proof for this case.

\begin{lemma}
\label{lem:BoundPerComponent}
For the PNE $\strategy$, let $\strategy_C$ be the allocation $\strategy$ restricted to component $C$. Then, for any component $C$, 
$$c(\strategy_C)\leq \min\{\err_C+1,\ell_C\}\cdot c(A_C(n))+\sum_{i\in D_C} \xi_i(\strategy_C).$$
\end{lemma}

\begin{proof}
 We will show the claim by mathematical induction by composing the graph $G$. First note the following two properties: 
 \begin{enumerate}
     \item If a component $C$ is constructed by the series composition of $C_1$ and $C_2$, $\ell_C=\ell_{C_1}=\ell_{C_2}$, $\err_C=\err_{C_1}=\err_{C_2}$, $d_C=d_{C_1}=d_{C_2}$ and $D_C=D_{C_1}=D_{C_2}$.
     \item If a component $C$ is constructed by the parallel composition of $C_1$ and $C_2$, $\ell_C=\ell_{C_1}+\ell_{C_2}$, $\err_C=\err_{C_1}+\err_{C_2}$.
 \end{enumerate}
 
 The statement of the lemma is true for the base case which is any single edge because for any edge $e$, $c(\strategy_e) = c_e(\ell_e)\leq c_e(\ell^*_e)+\sum_{i=1}^{d_e} (c_e(\ell^*_e+i) - c_e(\ell^*_e+(i-1))) =   c_e(\ell^*_e)+\sum_{i\in D_e} \xi_i(\strategy_e)\leq \min\{\err_e+1,\ell_e\}\cdot c(A_e(n))+\sum_{i\in D_e} \xi_i(\strategy_e)$. Suppose that the statement holds for two components $C_1$ and $C_2$. 
 
\begin{itemize}
    \item If $C$ is constructed by the series composition of $C_1$ and $C_2$, then by using the above properties,
    \begin{eqnarray*}
    c(\strategy_C) &=& c(\strategy_{C_1})+c(\strategy_{C_2})\\
    &\leq& \min\{\err_C+1,\ell_C\}\cdot (c(A_{C_1}(n))+c(A_{C_2}(n)))+\sum_{i\in D_{C}} (\xi_i(\strategy_{C_1})+\xi_i(\strategy_{C_2}))\\
    &=&\min\{\err_C+1,\ell_C\}\cdot c(A_C(n))+\sum_{i\in D_C} \xi_i(\strategy_C).
    \end{eqnarray*}
    
    \item If $C$ is constructed by the parallel composition of $C_1$ and $C_2$, suppose that $i\in D_{C_1}$ and $i\notin D_C$, meaning that player $i$ is not one of the $d_C$ lowest ranked players using $C$. Then, the number of players that have higher priority than $i$ and use $C_2$ is strictly less than $\ell^*_{C_2}$, otherwise $i\in D_C$. Therefore, by Lemma~\ref{lem:existOfOverUsedPath}, there exists a path $p'$ in $C_2$ where for each edge $e \in p'$ the number of those players is strictly less than $\ell^*_e$. If player $i$ deviated from $p_C$ to $p'$ he would be charged at most $c(A_{C_2}(n))$ for this segment (for notational simplicity we omit $\varepsilon$ here). Since $\strategy$ is a PNE $\xi_i(\strategy_{C_1}) \leq c(A_{C_2}(n))$.
    
    Similarly for every player $i$ with $i\in D_{C_2}$ and $i\notin D_C$, it holds $\xi_i(\strategy_{C_2}) \leq c(A_{C_1}(n))$. Note that $d_{C_1} \leq \err_{C_1}$ since $\ell_{C_1}\leq \hat{\ell}_{C_1}$, and $d_{C_1} \leq \ell_{C_1}$. Similarly $d_{C_2} \leq \err_{C_2}$ and $d_{C_2} \leq \ell_{C_2}$. Overall,
    
    \begin{eqnarray*}
    c(\strategy_C) &=& c(\strategy_{C_1})+c(\strategy_{C_2}) \\
    &\leq&\min\{\err_{C_1}+1,\ell_{C_1}\}\cdot c(A_{C_1}(n)) +d_{C_1}c(A_{C_2}(n)) + \sum_{i\in D_{C_1}\cap D_C}\xi_i(\strategy_{C_1})\\
    &&+\min\{\err_{C_2}+1,\ell_{C_2}\}\cdot c(A_{C_2}(n))   + d_{C_2}c(A_{C_1}(n)) + \sum_{i\in D_{C_2}\cap D_C}\xi_i(\strategy_{C_2})\\
    &\leq & \min\{\err_{C_1}+\err_{C_2}+1,\ell_{C_1}+\ell_{C_2}\}\cdot c(A_{C_1}(n))\\
    &&+ \min\{\err_{C_2}+\err_{C_1}+1,\ell_{C_2}+\ell_{C_1}\}\cdot c(A_{C_2}(n)) +\sum_{i\in D_C}\xi_i(\strategy_{C})\\
    &=&(\min\{\err_C+1,\ell_C\}\cdot c(A_C(n))+\sum_{i\in D_C} \xi_i(\strategy_C).
    \end{eqnarray*}
\end{itemize}
This completes the proof.
\end{proof}

Applying Lemma~\ref{lem:BoundPerComponent} to the whole graph $G$, and noticing that $D_G=\emptyset$, $\err_G=\err$ and $\ell_G=n$
\[\sum_{i \in N} \xi_i(\strategy) = c(\strategy) \leq \min\{\err+1,n\} \cdot c(A(n)) \leq 4\min\{\err+1,n\} \cdot c(\OPT(n)),\]
where the last inequality comes from Theorem~\ref{thm:onlineBound}.

\paragraph{Underprediction case: $\hat{n} < n$ }  Analogously, we first show that all paths in $A(\hat{n})$ are used, i.e., for all component $C$, we have $\ell_C \geq \hat{\ell}_C$. Assume for contradiction that this is not the case and following the decomposition of $G$, let $C_1$ be the first component such that $\ell_{C_1} < \hat{\ell}_{C_1}$. For this to happen, some component $C$ is constructed by the parallel composition of $C_1$ and some $C_2$, where $\ell_C > \hat{\ell}_C$ and in turns $\ell_{C_2} > \hat{\ell}_{C_2}$.
By Lemma~\ref{lem:ChargesOnTheWholePath} and Lemma~\ref{lem:existOfOverUsedPath}, there exists a player $i$ on $C_2$ and a path $p$ in $C_1$ such that player $i$ can decrease her cost by deviating to path $p$. Therefore the allocation is not a PNE which is a contradiction.

Then for any PNE $\strategy$, by Lemma~\ref{lem:ChargesOnTheWholePath}, the $\err$ lowest ranked players, let $S$ be this set of players, are charged $W_e$ for every edge $e$ they use, and the $\hat{n}$ higher ranked players use $A(\hat{n})$. Considering any player $i\in S$ and any component $C$ that $i$ uses, if $\ell_e \leq \ell^*_e$ for all edges that $i$ uses in $C$, then $i$ is charged at most $c(A_C(n))$ for the edges in component $C$ (for notational simplicity we omit $\varepsilon$ here). 

Let now $C_1$ be any maximal component that $i$ uses such that $\ell_{C_1} > \ell^*_{C_1}$. For this to happen, some component $C$ is constructed by the parallel composition of $C_1$ and some $C_2$, where $\ell_C \leq \ell^*_C$ and in turns $\ell_{C_2} < \ell^*_{C_2}$. By Lemma~\ref{lem:existOfOverUsedPath}, there exists a path $p'$ in $C_2$ such that $\ell_e<\ell^*_e$ for all $e \in p'$. If any player deviated from $C_1$ to $p'$, he would be charged at most $c(A_{C_2}(n))$. Since $\strategy$ is a PNE, overall, $\xi_i(\strategy)\leq c(A(n))$. Summing over all players,

\[ \sum_{i \in N} \xi_i(\strategy) \leq c(A(\hat{n})) + \sum_{i\in S} \xi_i(\strategy) \leq (\err+1) \cdot c(A(n)) \leq 4(\err+1) \cdot c(\OPT(n)).\]

At the same time, since there are only $n$ users and we showed that each user pays less than $c(A(n))$ for any strategy she picks, we also have:
\[\sum_{i \in N} \xi_i(\strategy) \leq n \cdot c(A(n)) \leq 4n \cdot c(\OPT(n)).\qedhere\]
\end{proof}

\section{Multicast Games: General Networks and Constant Cost Functions}
\label{sec:multicast}
We now move on to study the case of multicast network cost-sharing games which, unlike the games in the previous section, are not symmetric: each player $i$ has a given terminal vertex $t_i$ which determines the player's set of strategies, i.e., the paths from the source $s$ to $t_i$. In this regard, a complete prediction does not only forecast the number of players; it also needs to specify \emph{where} the players will appear, i.e., their terminal vertices. Prior work has shown that without any information regarding the players and terminals, the PoA of any budget-balanced resource-aware mechanism is $\Omega(\log n)$ \citep{CS16}. In this section, we design cost-sharing mechanisms that are enhanced with predictions regarding the participating player locations. More specifically, the input to the mechanism is the graph $G=(V,E)$ with edge weights $c_e, \forall e\in E$, and a subset of the vertices $H\subseteq V$ that corresponds to the predicted terminals for the players.

This mechanism achieves a constant price of anarchy when the prediction is correct (i.e., constant consistency), while maintaining a worst-case price of anarchy upper bound of $O(\log n)$ (which matches the best possible worst-case price of anarchy of any resource-aware mechanism), irrespective of how imprecise the prediction may be (i.e., asymptotically optimal robustness). In fact, we prove a general price of anarchy upper bound as a function of the prediction error, showing that the mechanism's performance gracefully transitions from a constant to $\log n$ as the error increases.

Our mechanism processes the underlying graph and the prediction and produces an ordering over all the vertices in the graph. Then an ordered cost-sharing protocol is used that follows this order. We remark that any ordered cost-sharing protocol, and therefore ours, is stable, i.e., it always admits a PNE, \citep{GMW14}.

First, as a warm-up, we consider the case where the set of players in the system is known and the predictions are regarding the location of each player’s terminal. The prediction error in this case is the aggregate metric distance of each terminal from the predicted one. 
We then extend our results to the case where even the number of players in the system is unknown. In this case, a set of vertices $H$ is the prediction of the actual set of terminals $\terminal$ that eventually appear in the system. The cardinality of $H$ and $\terminal$ may not be the same and we consider the case that each predicted point may be the prediction of any terminal from $\terminal$ or it is an outlier, meaning that it is not associated with any terminal from $\terminal$. On the other end, each terminal from $\terminal$ may be associated with one predicted point or it is an outlier, meaning that there is no predicted point for that terminal. The prediction error now is a combination of the aggregated metric distance of each terminal from the associated predicted point (if any) and the number of outliers, both predicted and actual points. This 
is a generalization of the framework recently proposed in the context of online graph algorithms \citep{APT21}.

\subsection{Warm-Up: Cost-Sharing Mechanism for a Known Set of Players}
\label{sec:multicastKnownNo}
We first consider the setting where the designer is aware of the graph, the edge costs, and the player identities, however, she only has a prediction on the location of the terminal for each player. That is, for each player $i$, the designer has a prediction $\eta(t_i)\in V$ regarding what player $i$'s terminal is (this prediction may be incorrect, i.e.,  $\eta(t_i)$ could be different than $t_i$). Let $H$ be the set of the predicted vertices and $\terminal$ be the set of the actual terminals, i.e. $H=\{\eta(t_i)\mid t_i \in \terminal\}$. To simplify our presentation, we assume that there will always be one player, player $0$, on the source $s$ who is correctly predicted. Note that this has no impact on the cost of any solution. When the number of players is correctly predicted, the main source of error is the fact that the predicted location of each player can be far away from the actual location. One natural way to quantify this error is to sum up the metric distances between the predicted and the actual location. Let $d(\cdot,\cdot)$ denotes the shortest distance on the weighted graph and $d_i=d(t_i, \eta(t_i))$. We define the distance metric error $D = \sum_{t_i\in \terminal} d_i$.  We further let $\OPT$ denote the cost of the optimal solution (which is the minimum Steiner tree on the set of terminals $R$ union the source $s$).

\paragraph{Cost-Sharing Mechanism.}Let $\MST(H)$ be a minimum spanning tree for the set of the predicted vertices $H$ on the metric closure of the graph\footnote{The metric closure of an undirected graph is the complete undirected graph on its vertex set, where the edge costs equal the shortest paths in the graph.} and $T(H)$ be an Eulerian tour after doubling the edges of $\MST(H)$. We consider an order $\pi_{\eta}$ of $H$, which is given by the order of appearance of the nodes in a DFS search on $T(H)$ starting from the source. This vertex order $\pi_{\eta}$ can be mapped to an order $\pi$ of the players, i.e. the first player according to $\pi$ has a terminal $t$ such that $\eta(t)$ is the first vertex according to $\pi_{\eta}$. In turn, the order $\pi$ defines an ordered cost-sharing protocol $\Xi$ where each player is responsible for the cost of an edge if and only if she is the first in $\pi$ among edge's users. See Figure~\ref{fig:mst} below for an illustration of the process.

\begin{figure}[h]
\centering
\begin{tikzpicture}[every node/.style={circle, draw, minimum size=.4cm},scale = 0.7]
    \node at (0,0) (s) {s};
    \node at (-2,-2) (1) {1};
    \node at (-4,-4) (2) {2};
    \node at (-6,-6) (3) {3};
    \node at (-2,-4) (4) {4};
	\node at (-3,-6) (5) {5};
    \node at (-1.5,-6) (6) {6};
    \node at (0,-4) (7) {7};
    \node at (2,-2) (8) {8};
    \node at (2,-4) (9) {9};
    \node at (4,-4) (10) {10};
    \node at (4,-6) (11) {11};
	\node at (4,-2) (12) {12};
	\draw[line width=0.05cm] (s) -- (1);
    \draw[line width=0.05cm] (s) -- (8);
    \draw[line width=0.05cm] (s) -- (12);
    \draw[line width=0.05cm] (1) -- (2);
    \draw[line width=0.05cm] (1) -- (4);
    \draw[line width=0.05cm] (1) -- (7);
    \draw[line width=0.05cm] (2) -- (3);
    \draw[line width=0.05cm] (4) -- (5);
    \draw[line width=0.05cm] (4) -- (6);
    \draw[line width=0.05cm] (8) -- (9);
    \draw[line width=0.05cm] (8) -- (10);
    \draw[line width=0.05cm] (10) -- (11);
\end{tikzpicture}
\caption{An example of a minimum spanning tree for the set of predicted vertices $H$, vertices are indexed in order of $\pi_{\eta}$.}
\label{fig:mst}
\end{figure}
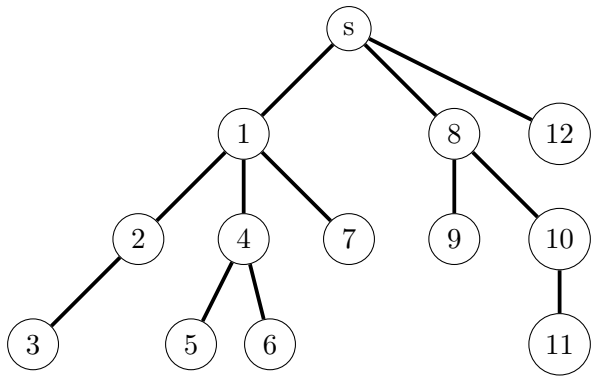 

The following theorem gives an upper bound on the price of anarchy of the above cost-sharing mechanism in this setting.

\begin{theorem}
\label{thm:multicastWarmUp}
Given a prediction with distance error $\distance$, our cost-sharing mechanism for the multicast network formation game guarantees a price of anarchy of at most $$\min\left\{ 4+\frac{6D}{\OPT},~\log n \right\}.$$
\end{theorem}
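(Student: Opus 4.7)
The plan is to establish the two terms in the minimum separately. The novel bound is $4 + 6\distance/\OPT$, which I will sketch via a deviation-based argument below; the $\log n$ bound is a worst-case robustness guarantee that follows from standard arguments for ordered cost-sharing protocols on multicast games (matching the $\Omega(\log n)$ lower bound of \citep{CS16}), and so I take it as inherited.

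For the first bound, I would begin by observing that $\MST(H) \le 2(\OPT + \distance)$, via the classic two-approximation for Steiner tree: take the optimal Steiner tree on $\terminal \cup \{s\}$ (of cost $\OPT$), extend it by appending a shortest path from each $t_i$ to its prediction $\eta(t_i)$ (adding at most $\distance$ in total), to obtain a tree on $H \cup \terminal \cup \{s\}$ of cost at most $\OPT + \distance$; the MST on the metric closure of $H \cup \{s\}$ is at most twice this Steiner tree cost. I would then invoke the standard DFS-tour property: if $\pi_{\eta} = (v_0 = s, v_1, \dots, v_{|H|-1})$ is the first-visit order in the DFS on the doubled MST, then $\sum_{k \ge 1} d(v_{k-1}, v_k) \le 2 \MST(H)$, since graph distances are at most the corresponding tour segments.

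The heart of the proof is a piggyback deviation argument. For each player $i$ at rank $k$ in $\pi$ (so $\eta(t_i) = v_k$), I would consider the deviation in which $i$ follows the PNE path $p_{k-1}^{\strategy}$ of the immediately preceding player from $s$ to $t_{k-1}$, and then appends the shortest path from $t_{k-1}$ to $t_i$; for the rank-$1$ player the deviation is simply the shortest path from $s$ to $t_1$, which fits the general formula with $d_0 = 0$ and $v_0 = s$. Under the ordered protocol, $i$ pays nothing on the piggybacked segment because every edge there is already used by player $k-1$, who is strictly higher in $\pi$ than $i$. On the shortest-path extension, $i$'s charge is at most $d(t_{k-1}, t_i)$, and the triangle inequality gives $d(t_{k-1}, t_i) \le d_{k-1} + d(v_{k-1}, v_k) + d_k$. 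The PNE condition therefore yields $\xi_i(\strategy) \le d_{k-1} + d(v_{k-1}, v_k) + d_k$. Summing over all players, each $d_j$ appears at most twice and the middle terms sum by the tour property:
\[
c(\strategy) \;=\; \sum_i \xi_i(\strategy) \;\le\; 2 \MST(H) + 2\distance \;\le\; 4(\OPT + \distance) + 2\distance \;=\; 4\,\OPT + 6\distance,
\]
which gives $\poa \le 4 + 6\distance/\OPT$.

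The main technical subtlety is verifying that the piggybacked walk constitutes a valid strategy (which is handled by simplifying to a simple path without increasing cost), and that shared edges between the piggybacked segment and the shortest-path extension only decrease $i$'s payment, preserving the upper bound. Edge cases involving the rank-$1$ player, predicted terminals coinciding with $s$, or repeated predictions among players are handled by the same decomposition with straightforward tie-breaking in $\pi$; the main novel ingredient is the choice of deviation that exploits the DFS ordering to turn the Steiner-tree 2-approximation into a PoA 2-approximation with respect to $c(\hat{\strategy})$.
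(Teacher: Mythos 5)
Your proposal is correct and follows essentially the same route as the paper: both bound each player's equilibrium cost by the deviation through the preceding player's terminal, giving $\xi_i(\strategy)\leq d_{i-1}+d(\eta(t_{i-1}),\eta(t_i))+d_i$, then sum via the doubled-MST Euler tour and the bound $c(\MST(H))\leq 2(\OPT+\distance)$ to obtain $4\OPT+6\distance$, with the $\log n$ term inherited from the online Steiner tree competitive ratio. Your explicit treatment of the piggyback deviation (paying zero on edges already used by a higher-priority player) and of the rank-one player is just a more careful spelling-out of what the paper states implicitly.
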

\begin{proof}
First note that the $\log n $ term follows directly from the competitive ratio of the greedy algorithm for the online Steiner tree problem \citep{IW91}. The greedy algorithm takes as input some sequence of terminals and connects them one after the other to the already formed component via shortest paths. 
The competitive ratio is the comparison of the outcome of the greedy mechanism for the {\em worst case} order with the minimum Steiner tree. Our cost-sharing mechanism (in fact any ordered cost-sharing protocol) considers some sequence of the terminals given by their order. Then, it is always in the players' best interest to connect to the component already formed by previous players via shortest paths, as the greedy algorithm does.   Therefore, the competitive ratio of the greedy algorithm for the online Steiner tree problem provides an upper bound on the price of anarchy of any ordered cost-sharing protocol.

We next show that for any PNE $\strategy$ in the game induced by our cost-sharing mechanism, it holds that:
$$\sum_i\xi_i(\strategy) \leq 2\distance + 2(2\distance +2 \OPT) =  6\distance + 4\OPT.$$ 
For the rest of the section we use the convention that $t_0 = \eta(t_0) = s$, and we re-index the players according to the order $\pi$ computed via our mechanism.

In any PNE $\strategy$, each player $i$ would connect $t_i$ to the component formed by the players preceding her in $\pi$ via the shortest path in order to minimize her cost. Since $(t_i, \eta(t_i),\eta(t_{i-1}), t_{i-1})$ is a path connecting $t_i$ with $t_{i-1}$ (and therefore it connects $t_i$ with the component containing $s$), we can bound the cost share of player $i$ as follows
$$\xi_i(\strategy) \leq d_i + d(\eta(t_i),\eta(t_{i-1})) +d_{i-1}\,.$$
Summing over all the players we get:
\begin{eqnarray}
\sum_i\xi_i(\strategy) &\leq& 2\sum_{t_i\in R} d_i + \sum_{t_i\in R} d(\eta(t_i),\eta(t_{i-1})) \\
&\leq& 2\distance + c(T(H))\\
&\leq&  2\distance +2 c(\MST(H))\,,  \label{eq:totalcost}
\end{eqnarray}
where the inequalities comes from the fact that the total distance between the predictions are upper bounded by the cost of the Eulerian tour $c(T(H))$, which is no more than two times the cost of the minimum spanning tree $c(\MST(H))$.

We use $\OPT(H)$ to denote the optimal network, i.e., the minimum Steiner tree on $H$. Recall that $\OPT$ is the minimum Steiner tree on $R$. Consider a Steiner tree connecting $H$ formed by the union of $\OPT$ and the shortest paths between each $t_i \in R$ and $\eta(t_i)$. Then, we have:
\begin{equation}
    c(\MST(H)) \leq 2\OPT(H) \leq 2\OPT + 2D\,, \label{eq:boundMST}
\end{equation}
where the first inequality comes from the fact that any minimum spanning tree is a 2 approximation to the minimum Steiner tree. After combining \eqref{eq:totalcost} with \eqref{eq:boundMST} we get that the total cost is
\[\sum_i\xi_i(\strategy) \leq 2\distance + 2(2\distance +2 \OPT) = 6\distance + 4\OPT. \qedhere\]
\end{proof}

Note that, if the predictions are accurate, i.e., $d_i=0$ for all $i\in [n]$, the price of anarchy is at most 4. If they are inaccurate, our upper bound grows linearly as a function of the total prediction error, but it never exceeds $\log n$, even if the predictions are arbitrarily bad. Therefore, our results achieve asymptotically optimal consistency and robustness guarantees, along with a smooth transition from one to the other.

\subsection{Cost-Sharing Mechanism for an Unknown Set of Players}
\label{sec:multicastUnknownNo}
We now consider a broader prediction framework. Rather than assuming that the predicted terminals' cardinality is always correct, we consider predictions with estimates on both the number of players (terminals) and their possible locations, which may or may not be accurate. That is, the designer is given a set of $H$ corresponding to potential terminals, which may not be the same as the set of the actual terminals $R$ that appear, in terms of both size (i.e., $|R|$ may not be equal to $|H|$) and terminals' location. We adopt and generalize the framework for measuring the error in online graph algorithms defined in \citep{APT21}, which captures both error types: the number of terminals that may not predicted correctly and the distance between the predicted location and the actual terminal. In order to compute the error, we have the option to associate each actual terminal with a predicted point or to keep it ``unmatched''. Each terminal may be associated with at most one predicted point (the one that is the closest to it\footnote{If for some reason we wanted a different association, our result would still follow because the distance from any other predicted point is bounded by the one to the closest point.}), but we allow each predicted point to be associated with more than one terminal (note that this model is more general than the one considered in \citet{APT21} where each predicted point may be associated with at most one terminal). Similarly to actual points, we may keep some predicted points ``unmatched". 

Formally, let $\terminal'\subseteq \terminal$ be the set of terminals that are associated with some predicted point and $H'\subseteq H$ be the set of predicted points that are associated with at least one terminal. We denote by $\eta(t)$ the predicted point for terminal $t\in \terminal'$ and $\terminal'(u)$ the set of terminals associated with the predicted point $u \in H'$. For a given assignment $\eta$ (and the corresponding subsets $R'$ and $H'$), the prediction error is defined as a tuple $(D,\delta)$. $D$ is the distance metric error of $R'$ (defined in the same way as in the previous section) i.e., $D = \sum_{t_i\in \terminal} d_i$, where $d_i=d(t_i, \eta(t_i))$ and $d(\cdot,\cdot)$ denotes the shortest distance on the weighted graph.
$\delta$ is the number of unmatched points (terminals and predictions), i.e., $\delta = |\terminal\setminus \terminal'|+|H\setminus H'|$, those can be seen as {\em outliers}. Note that for any assignment $\eta$ we have a different tuple $(D,\delta)$; we remark that our results hold for every assignment $\eta$ and, therefore, any error pair $(D, \delta)$. See Figure~\ref{fig:matching} below for an illustration of the process.

Note that if $|H|=|R|$, i.e., the number of players is predicted correctly, this bound is at least as good as the bound we achieved when the set of players is known (Section~\ref{sec:multicastKnownNo}): we can just use the minimum weighted matching of players to predictions as the assignment $\eta$. However, our new bound can be even stronger, since we can also keep some players unassigned, or assigned to the same prediction.

\begin{figure}[h]
\centering
	\begin{tikzpicture}[every node/.style={circle, draw, minimum size=.4cm},scale=0.7]
    \node at (0,0) (s) {s};
    \node[cobalt] at (-3,-0.5) (1) {1};
    \node[cobalt] at (-4,-2) (2) {2};
    \node at (-2,-2) (3) {3};
    \node at (-4,-4) (4) {4};
    \node[cobalt] at (-6,-4) (5) {5};
    \node[cobalt] at (-8,-8) (6) {6};
    \node at (-6,-6) (7) {7};
    \node[cobalt] at (-6,-8) (8) {8};
    \node at (-2,-4) (9) {9};
    \node at (-3,-6) (10) {10};
    \node[cobalt] at (-4.5,-8) (11) {11};
    \node[cobalt] at (-3,-8) (12) {12};
    \node[cobalt] at (-1.5,-8) (13) {13};
    \node[cobalt] at (0,-8) (14) {14};
    \node at (-1.5,-6) (15) {15};
    \node at (0,-4) (16) {16};
    \node[cobalt] at (0,-6) (17) {17};
    \node at (2,-2) (18) {18};
    \node[cobalt] at (2,-6) (19) {19};
    \node at (2,-4) (20) {20};
    \node[cobalt] at (6,-4) (21) {21};
    \node at (4,-4) (22) {22};
    \node at (4,-6) (23) {23};
    \node[cobalt] at (4,-8) (24) {24};
    \node[cobalt] at (6,-6) (25) {25};
    \node at (4,-2) (26) {26};
    \node[cobalt] at (6,-0.5) (27) {27};
    \node[cobalt] at (6,-2) (28) {28};
    \draw[line width=.05cm](s) -- (3);
    \draw[line width=.05cm](s) -- (18);
    \draw[line width=.05cm](s) -- (26);
	\draw[cobalt, dashed] (1) -- (3);
	\draw[cobalt, dashed] (2) -- (3);
    \draw[line width=.05cm](3) -- (4);
    \draw[line width=.05cm](3) -- (9);
    \draw[line width=.05cm](3) -- (16);
    \draw[line width=.05cm](4) -- (7);
	\draw[cobalt, dashed] (4) -- (5);
	\draw[cobalt, dashed] (6) -- (7);
	\draw[cobalt, dashed] (7) -- (8);
    \draw[line width=.05cm](9) -- (10);
    \draw[line width=.05cm](9) -- (15);
	\draw[cobalt, dashed] (10) -- (11);
    \draw[cobalt, dashed] (10) -- (12);
    \draw[cobalt, dashed] (10) -- (13);
	\draw[cobalt, dashed] (14) -- (15);
	\draw[cobalt, dashed] (16) -- (17);
    \draw[line width=.05cm](18) -- (20);
    \draw[line width=.05cm](18) -- (22);
	\draw[cobalt, dashed] (19) -- (20);
	\draw[cobalt, dashed] (21) -- (22);
    \draw[line width=.05cm](22) -- (23);
	\draw[cobalt, dashed] (23) -- (24);
	\draw[cobalt, dashed] (23) -- (25);
	\draw[cobalt, dashed] (26) -- (27);
	\draw[cobalt, dashed] (26) -- (28);
\end{tikzpicture}
\caption{An assignment (dashed lines) of all vertices to the predicted vertex (black). The solid lines represent a minimum spanning tree on the predicted terminals. The order of the vertices that are matched with the same predicted vertex are arbitrarily assigned.}
\label{fig:matching}
\end{figure}
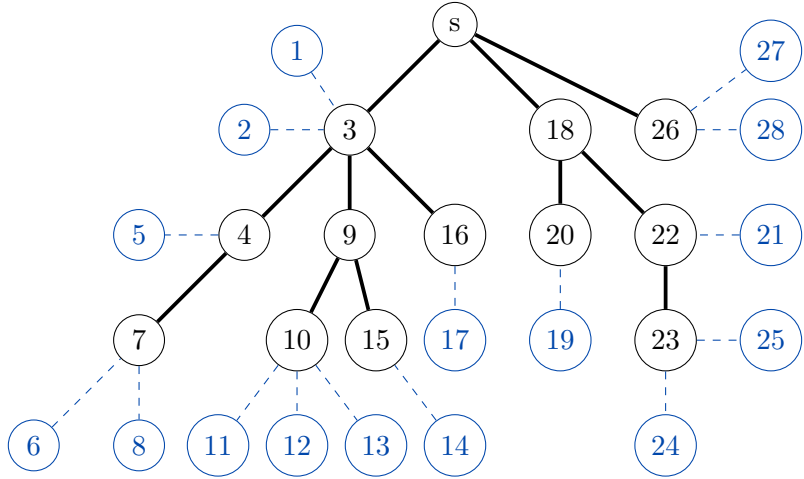

\paragraph{Cost-Sharing Mechanism.}

Let $\MST(H)$ be a minimum spanning tree for the set of the predicted vertices $H$ on the metric closure of the graph and $T(H)$ be a Eulerian tour after doubling the edges of $\MST(H)$. We consider an order $\pi_{\eta}$ of $H$ based on their first appearance in a DFS search in $T(H)$ starting from $s$. Using $\pi_{\eta}$ we define a global ordering $\pi$ of all vertices in the graph as follows. We assign each vertex $v$ to the predicted point $\eta(v)$ that is the closest to it (solve ties arbitrarily). Then we define a partial order of all vertices that respects the order of the assigned predicted points, i.e., for every two vertices $u,u'\in H$ with $u<u'$ in $\pi_{\eta}$ and every two vertices $v$ and $v'$ such that $\eta(v)=u$ and $\eta(v')=u'$, respectively, $v<v'$ in the partial order. We then define a global order $\pi$ by considering this partial order and an arbitrary order among vertices/players assigned to the same predicted point. The order $\pi$ defines an ordered cost-sharing protocol $\Xi$ where each player pays the whole cost for the edges for which she is the first in the ordering among its users.

We are now ready to give our main theorem for this setting by considering the above cost-sharing mechanism.
\begin{theorem}
If $\mathcal{D}$ is the set of all $(D, \delta)$ pairs that correspond to some assignment $\eta$, our cost-sharing mechanism for multicast network formation games guarantees price of anarchy at most $$\min\left\{\min_{(D,\delta)\in\mathcal{D}}\left\{ 4+\frac{6\distance}{\OPT}+\log \err\right\},~\log n \right\}.$$ 
\end{theorem}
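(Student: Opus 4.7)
The plan is to extend the warm-up proof of Theorem~\ref{thm:multicastWarmUp}, splitting the PNE cost into a warm-up-style contribution from the matched terminals in $\terminal'$ and an online-Steiner-tree contribution from the $\err$ outliers. The $\log n$ branch of the minimum is immediate, exactly as in the warm-up: any ordered cost-sharing protocol enacts greedy online Steiner tree at equilibrium, whose competitive ratio is $O(\log n)$~\citep{IW91}.

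For the refined branch, I would fix an assignment $\eta:\terminal'\to H'$ realizing some $(\distance,\err)\in\mathcal{D}$ and any PNE $\strategy$, and split
\[
\sum_{i\in N}\xi_i(\strategy)\;=\;\sum_{t\in \terminal'}\xi_t(\strategy)\;+\;\sum_{t\in \terminal\setminus \terminal'}\xi_t(\strategy),
\]
bounding the two summands separately. For the first, I would adapt the warm-up chain argument: when a matched terminal $t\in\terminal'$ with $\eta(t)=u$ arrives in $\ordering$, route $t$ along the path $t\to u\to\eta(t_{\mathrm{prev}})\to t_{\mathrm{prev}}$, where $t_{\mathrm{prev}}$ is the preceding matched terminal in $\ordering$ (skipping any outliers in between), to obtain $\xi_t(\strategy)\le d_t+d(\eta(t),\eta(t_{\mathrm{prev}}))+d_{t_{\mathrm{prev}}}$. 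Summing telescopically gives $\le 2\distance + c(T(H))$, and the warm-up-style bound $c(T(H))\le 2\,c(\MST(H))$ combined with a careful estimate $c(\MST(H))\le 2(\OPT+\distance)+O(\log\err)\cdot\OPT$ (see the obstacle below) yields at most $4\OPT+6\distance+O(\log\err)\cdot\OPT$.

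For the second summand, I would invoke the greedy online Steiner tree bound directly on the sub-sequence of outlier terminals: since matched terminals already present in the component can only reduce each outlier's connection cost, $\sum_{t\in\terminal\setminus\terminal'}\xi_t(\strategy)$ is at most the cost of running greedy online Steiner tree on $\terminal\setminus\terminal'$ initialized from $\{s\}$. By~\citep{IW91} this is at most $O(\log|\terminal\setminus\terminal'|)$ times the minimum Steiner tree on $(\terminal\setminus\terminal')\cup\{s\}\subseteq\terminal\cup\{s\}$, which is at most $\OPT$; since $|\terminal\setminus\terminal'|\le\err$, the summand is at most $\log\err\cdot\OPT$. Combining the two summands gives the claimed $(4+6\distance/\OPT+\log\err)\cdot\OPT$ bound.

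The main obstacle is controlling $c(\MST(H))$ in the first summand: unlike in the warm-up (where $|H|=|H'|$), unmatched predictions in $H\setminus H'$ may be far from $\terminal\cup H'\cup\{s\}$ and can inflate $c(\MST(H))$ well beyond $2(\OPT+\distance)$. The resolution is to construct an explicit Steiner tree on $H$ in two stages: first take $\OPT$ together with the matching paths of total cost $\distance$ to span $\terminal\cup H'\cup\{s\}$, then greedily attach the $|H\setminus H'|$ unmatched predictions one by one via shortest paths. The second stage is itself an online Steiner tree augmentation, so by~\citep{IW91} it costs at most $O(\log|H\setminus H'|)\cdot\OPT$, giving $c(\MST(H))\le 2\,\OPT(H)\le 2(\OPT+\distance)+O(\log\err)\cdot\OPT$ as required; this $O(\log\err)\cdot\OPT$ overhead is exactly what gets absorbed into the additive $\log\err$ of the theorem.
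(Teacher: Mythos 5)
Your decomposition into matched terminals $\terminal'$ and outliers $\terminal\setminus\terminal'$, the treatment of the second summand via the online Steiner tree bound, and the $\log n$ branch all match the paper. But the first summand contains a genuine gap: your bound $c(\MST(H))\le 2(\OPT+\distance)+O(\log\err)\cdot\OPT$ is false. The greedy augmentation that attaches the unmatched predictions $H\setminus H'$ is competitive against the \emph{optimal Steiner tree on those unmatched predictions together with the initial component}, not against $\OPT$. Since the points of $H\setminus H'$ are by definition associated with no actual terminal, nothing ties their location to $\OPT$ or to $\distance$; they can sit arbitrarily far from everything, making $c(\MST(H))$ (and hence your chain cost $c(T(H))$) unboundedly larger than any function of $\OPT$, $\distance$, and $\err$. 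This is not a technicality: the entire reason the error model allows discarding predictions into $H\setminus H'$ is to prevent such far-away spurious predictions from contaminating the bound, and your single global chain over all matched terminals, ordered by the DFS of $T(H)$, reintroduces them, because two matched predictions that are consecutive among the matched ones can be separated on the tour by a long detour through outlier predictions.

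The paper's resolution, which is the idea you are missing, is to never chain across outlier predictions at all. Starting from $\MST(H)$, it deletes the edge joining each $v\in H\setminus H'$ to its parent, producing a forest $\DH$ with at most $|H\setminus H'|$ components; an exchange argument (adding those deleted edges to $\MST(H')$ yields a spanning tree of $H$) gives $c(\DH)\le c(\MST(H'))$, and $c(\MST(H'))\le 2\OPT+2\distance$ holds because every vertex of $H'$ \emph{is} matched. The telescoping chain is then run only \emph{within} each component (Lemmas~\ref{lem:boundPredictions} and~\ref{cl:boundNotFirsts}), costing $2c(\MST(H'))+2\distance$ in total, while the first terminal of each component is connected separately via the online Steiner tree argument, contributing the $\log\err$ term --- and this is sound precisely because those first terminals are actual terminals in $\terminal'$, so the relevant offline optimum is a sub-instance of $\OPT$. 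To repair your proof you would need to replace your global chain and the bound on $c(\MST(H))$ with this component decomposition (or an equivalent device that avoids paying for the edges of $\MST(H)$ incident to $H\setminus H'$).
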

\begin{proof}
Suppose any assignment $\eta$ along with the corresponding error pair $(D, \delta)$. This gives an order $\pi$ of all vertices. The $\log n $ term follows directly from \citep{IW91} as in the proof of Theorem~\ref{thm:multicastWarmUp}.

Let $\strategy$ be any PNE induced by our cost-sharing mechanism. For simplicity we denote the cost-share of each player $i$ with terminal $t$ as $\xi_t(\strategy)$ instead of $\xi_i(\strategy)$ and we denote $d_t=d(t,\eta(t))$.\footnote{We may assume without loss of generality that there exists at most one player with terminal at each vertex. If more than one players have their terminal on the same vertex, all players but the first one among them according to $\pi$ would follow the first one and be charged zero. This means that the outcome would be exactly the same even if those players weren't there.} 
In $\strategy$, each player $i$ would connect their terminal $t$ to the component formed by the players preceding her in $\pi$ via shortest path in order to minimize her cost. If $t'$ is any terminal preceding $t$ in $\pi$, $(t, \eta(t),\eta(t'), t')$ is a path connecting $t$ with $t'$ and therefore with the component containing $s$. We can bound the cost share of player $i$ as follows
\begin{equation}
\xi_t(\strategy) \leq d_t + d(\eta(t),\eta(t')) +d_{t'}\,.\label{eq:cost-share}
\end{equation}
Note that our distance error measurement $D$ captures the first and last terms in the right hand side. For the rest of the proof we aim to bound the cost of the second term.
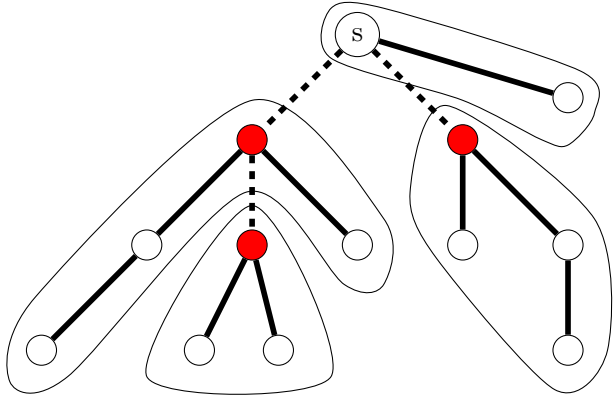
\begin{figure}[h]
    \centering
    \begin{tikzpicture}[every node/.style={circle, draw, minimum size=.4cm},scale = 0.7]
    \node at (0,0) (s) {s};
	\node[fill=red] at (-2,-2) (1) {};
    \node at (-4,-4) (2) {};
    \node at (-6,-6) (3) {};
	\node[fill=red] at (-2,-4) (4) {};
	\node at (-3,-6) (5) {};
    \node at (-1.5,-6) (6) {};
    \node at (0,-4) (7) {};
	\node[fill=red] at (2,-2) (8) {};
    \node at (2,-4) (9) {};
    \node at (4,-4) (10) {};
    \node at (4,-6) (11) {};
	\node at (4,-1.2) (12) {};
	\draw[line width=0.075cm, dashed] (s) -- (1);
    \draw[line width=0.075cm, dashed] (s) -- (8);
    \draw[line width=0.075cm] (s) -- (12);
    \draw[line width=0.075cm] (1) -- (2);
    \draw[line width=0.075cm, dashed] (1) -- (4);
    \draw[line width=0.075cm] (1) -- (7);
    \draw[line width=0.075cm] (2) -- (3);
    \draw[line width=0.075cm] (4) -- (5);
    \draw[line width=0.075cm] (4) -- (6);
    \draw[line width=0.075cm] (8) -- (9);
    \draw[line width=0.075cm] (8) -- (10);
    \draw[line width=0.075cm] (10) -- (11);

	\draw plot[smooth cycle, very thick] coordinates {(-2,-1.25) (-6.125,-5) (-6.6,-6.6) (-5.5,-6.5) (-2.125,-3) (-0.25,-4.75) (0.5,-4.75) (0.5,-3.5)};
	\draw plot[smooth cycle, very thick] coordinates {(-2,-3.25) (-4,-6.5) (-0.5,-6.5)};
	\draw plot[smooth cycle, very thick] coordinates {(0.5,0.5) (4.25,-0.7) (4.5,-1.5) (4,-2.1) (3.5,-2) (2,-1.25) (-0.5,-0.45) (-0.5, 0.5)};
	\draw plot[smooth cycle, very thick] coordinates {(1.7,-1.35) (1,-3) (1.5,-4.25) (3.75,-6.75) (4.75,-6) (4.5,-3.5)};
\end{tikzpicture}
    \caption{The figure shows the construction of $\DH$. The red, solid vertices are the set $H\setminus H'$ and the dashed lines are the ones that we remove from the $\MST(H)$ in order to get the disconnected graph $\DH$ that is formed by the four shown connected component.}
    \label{fig:DCCgraph}
\end{figure}

Starting from the $\MST(H)$, for each prediction outliers $v\in H\setminus H'$, we remove the edge between $v$ and its parent in $\MST(H)$. By doing so we separated the minimum spanning tree into at most $|H\setminus H'|$ connected components disconnected from the source $s$.See Figure~\ref{fig:DCCgraph} for an illustration of this process. We denote the discounted graph as $\DH$. 

We first bound the cost of $d(\eta(t),\eta(t'))$ for those $\eta(t)$ that are {\em not} the first according to $\pi_{\eta}$ inside the connected component that they belong to (Lemma~\ref{lem:boundPredictions}). We then would use it to bound the cost-shares of all players in $\terminal'$ that are {\em not} the first according to the global ordering of the actual terminals $\pi$ in all connected components (Claim~\ref{cl:boundNotFirsts}). For the rest $|H\setminus H'|$ players who are the first according to $\pi$ for each of the connected component\footnote{Note that for the connected component that contains the source, this player is player $0$ with terminal on the source.}, we will again use the competitive ratio of the Online Steiner tree problem \citep{IW91}.

We now focus on $\eta(t)$ that are not the first terminal according to $\pi_{\eta}$ inside their connected component. Note that the graph $\DH$ contains all the vertices of $H'$. For each connected component $\CC$ of $\DH$, let $H'(\CC) \subseteq H'$ be the vertices of $\CC$ that belongs to $H'$. Moreover, let $f_{H'(\CC)} \in H'(\CC)$ be the vertex in $\CC$ that is first according to $\pi_{\eta}$ and for every other vertex $v \in H'(\CC)$, let $p_v \in H'(\CC)$ be the vertex that precedes $v$ according to $\pi_{\eta}$ truncated to the vertices $H'(\CC)$. 

\begin{lemma}
\label{lem:boundPredictions}
$\sum_{\CC\in \DH}\sum_{v\in H'(\CC),v\neq f_{H'(\CC)}} d(v,p_v) \leq 2c(\MST(H'))\,.$
\end{lemma}
\begin{proof}
First note that each connected component $\CC$ of $\DH$ is a tree and more specifically a subtree of $\MST(H)$. Therefore, if we double the edges of $\CC$, there is an Eulerian tour $T(\CC)$ such that the first appearance of the vertices coincides with the order $\pi_{\eta}$ truncated to the vertices $H'(\CC)$. Therefore, 
$$\sum_{v\in H'_{\CC},v\neq f_{H'(\CC)}} d(v,p_v) \leq c(T(\CC)) \leq 2c(\CC).$$
Summing over all connected components of $\DH$ we get 
\begin{eqnarray}\label{eq:sumofallnormal}
\sum_{\CC\in \DH}\sum_{v\in H'(\CC),v\neq f_{H'(\CC)}} d(v,p_v) \leq 2\sum_{\CC\in \DH} c(\CC) = 2c(\DH).
\end{eqnarray}
Note that the weight of the disconnected graph $\DH$ is at most the weight of the minimum spanning tree for the set $H'$ on the metric closure of the graph, $\MST(H')$. To see this, let $E$ be the set of all edges connecting each vertex $v\in H\setminus H'$ with its parent in $\MST(H)$. Adding $E$ to $\MST(H')$ forms a spanning tree on $H$. We have:
$$c(\DH)+c(E) = c(\MST(H)) \leq c(\MST(H')) + c(E),$$
\begin{eqnarray}\label{eq:lemma5}
c(\DH) \leq c(\MST(H')).
\end{eqnarray}
Combining \eqref{eq:sumofallnormal} and \eqref{eq:lemma5} the lemma follows.
\end{proof}

We now bound the cost-share of all players in $R'$ that are {\em not} the first according to the global ordering of the actual terminals $\pi$ in all connected components. For each connected component $\CC$ of $\DH$, let $\terminal'(\CC)\subseteq \terminal'$ be the set of all terminals associated with some vertex in $H'(\CC)$, i.e., $\terminal'(\CC)=\cup_{v\in H'(\CC)}\terminal'(v)$ (recall that $\terminal'(v)$ are the terminals associated with $v\in H'$, i.e., that their closest predicted point is $v$). Moreover, let $f_{R'(\CC)}$ be the first terminal among $\terminal'(\CC)$ (or equivalent among $\terminal'(f_{H'(\CC)})$) according to $\pi$. For every other terminal $t \in \terminal'(\CC)$, let $p_t \in \terminal'(\CC)$ be the terminal that precedes $t$ according to $\pi$ truncated to the vertices $\terminal'(\CC)$. We now bound the cost-share for all terminals that are not the first according to $\pi$ in any $R'(X)$.

\begin{lemma}\label{cl:boundNotFirsts}
$\sum_{\CC\in \DH}\sum_{t\in \terminal'(\CC),t\neq f_{R'(\CC)}} \xi_t(\strategy) \leq 2c(MST(H')) + 2D\,.$
\end{lemma}
\begin{proof}
Consider any connected component $\CC$ of $\DH$, and apply inequality~\eqref{eq:cost-share} for each terminal $t \in \terminal'(\CC)$ with $t\neq f_{R'(\CC)}$ by considering $p_t$ as the terminal that appears before $t$ in $\pi$. That is, 
\begin{equation}
   \xi_t(\strategy) \leq d_t + d(\eta(t),\eta(p_t)) +d_{p_t}\,. \label{eq:cost-sharePrevious}
\end{equation}
Note that each predicted point $v\in H'(\CC)$ may be associated with many terminals, $\terminal'(v)$. For all those terminals but the first one in $\pi$, $d(\eta(t),\eta(p_t))=0$ because $\eta(t)$ and $\eta(p_t)$ are both $v$. Moreover, if $t$ is the first terminal in $\terminal'(v)$ according to $\pi$, then $p_t$ is the last terminal in $\terminal'(p_v)$ according to $\pi$. Therefore by summing over all terminals in $\terminal'(\CC)$ but $f_{R'(\CC)}$, we get:
$$\sum_{t\in \terminal'(\CC),t\neq f_{R'(\CC)}} d(\eta(t),\eta(p_t)) = \sum_{v\in H'(\CC),v\neq f_{H'(\CC)}} d(v,p_v)\,.$$
Since in inequality \eqref{eq:cost-sharePrevious} the distance of each terminal $t\in \terminal'(\CC)$ from $\eta(t)$ appears at most twice, we therefore have
$$\sum_{t\in \terminal'(\CC),t\neq f_{R'(\CC)}} \xi_t(\strategy) \leq 2 \sum_{t\in \terminal'(\CC)} d_t+\sum_{v\in H'(\CC),v\neq f_{H'(\CC)}} d(v,p_v) \,.$$
The lemma follows after summing over all connected components of $\DH$ and using Lemma~\ref{lem:boundPredictions}.\end{proof}

We now bound the cost-shares of the rest of the terminals in $\terminal'$, by using the competitive ratio of the online Steiner tree problem \citep{IW91}. Note that the optimum solution for connecting those terminals is upper bounded by $\OPT$, since those are a subset of the total set of terminals. Given that there are at most $\err_H=|H\setminus H'| $ connected components in $\DH$,
$$\sum_{\CC\in \DH}  \xi_{f_{R'(\CC)}}(\strategy) = \log \err_H \cdot \OPT\,.$$
Combining with Lemma~\ref{cl:boundNotFirsts} we get: 
\begin{equation}
\sum_{t\in \terminal'}\xi_t(\strategy) \leq \log\err_H \cdot \OPT + 2c(\MST(H')) + 2D\,.\label{eq:almostDone}
\end{equation}
Note that $c(\MST(H')) \leq 2\OPT + 2D$. To see this, we compare the minimum Steiner tree of $H'$, $\OPT(H')$, with the Steiner tree of $H$ formed by the union of $\OPT$ (the minimum Steiner tree on $R$) and the shortest path between $t$ and $\eta(t)$ for all $t \in R'$, we get:
$$c(\MST(H')) \leq 2\OPT(H') \leq 2\left(\OPT + \sum_{t\in \terminal'} d(t, \eta(t))\right)= 2\OPT + 2D,$$
Therefore \eqref{eq:almostDone} can be rewritten as:
\begin{equation}
    \sum_{t\in \terminal'} \xi_t(\strategy) \leq (\log\err_H+4) \OPT+ 6D\,.\label{eq:lem14}
\end{equation}

Similarly, we can use the competitive ratio of the online Steiner tree problem \citep{IW91} to bound the cost-shares of the terminals in $\terminal \setminus \terminal'$. For $\err_\terminal = |\terminal\setminus \terminal'|$, we have that 

$$\sum_{t\in \terminal\setminus \terminal'} \xi_t(\strategy) \leq \log\err_\terminal \cdot \OPT\,.$$
Combining with \eqref{eq:lem14} the theorem follows.
\end{proof}

\section{Conclusion and Future Directions}
\label{sec:conclusion}
In this work we extend the learning-augmented framework toward the design of decentralized mechanisms in strategic settings. This framework has recently received a lot of attention in the algorithm design literature and was very recently also extended to mechanism design. In our setting, the information that the designer is missing is due to the decentralized nature of the system, and the goal of this paper is to evaluate the extent to which predictions could overcome this obstacle. Our main results show that augmenting decentralized mechanisms with predictions can lead to major improvement in the price of anarchy bounds achievable in both scheduling games and multicast network formation games. In the first class of games, we allow general cost functions, but restrict the structure of the graph, while in the latter we allow a general graph structure and restrict the types of cost functions. The most compelling direction for future research is to bridge this gap and to evaluate the extent to which resource-aware mechanisms with predictions can achieve good PoA bounds for more general combinations of graph structures and cost functions.

A first natural direction for future research would be to extend our results on series-parallel graphs to more general graphs. However, even if we just slightly expand this family of graphs to include the famous ``Braess Paradox'' graph, the approach of Section~\ref{sec:online} runs into trouble. To explain this obstacle, we provide an example of such a graph in Figure~\ref{fig:braessparadox}, where the notation $\{1, k^2\}$ on edge $(s, a)$ implies that the cost of that edge is $1$ if a single player uses it and $k^2$ if two players use it, where $k$ is some arbitrarily large value (and the costs are defined similarly for all other edges). If we tried to design an analogous online algorithm with a bounded competitive ration for this graph, the first player to arrive would need to be assigned to the path $s\to a\to b\to t$ for a cost of 3. If not, then the cost of any alternative path would be at least $k+1$ and the competitive ratio of the algorithm for the single player case would be proportional to $k$ (and, hence, unbounded). If the algorithm commits to this assignment and a second player arrives, the algorithm would need to suffer a cost of at least $k^2$, no matter what path it chooses (since there will be at least two players using either edge $(s,a)$ or $(b,t)$). This, once again, would lead to an unbounded competitive ratio, since the optimal solution for two players would be to schedule one of them through the path $s\to a\to t$ and the other one through $s\to b\to t$, leading to a cost of $2k+2$.

\begin{figure}[h]
    \centering
    \begin{tikzpicture}[scale = 0.9]
	\node[draw,circle] at (0,0) (s) {s};
	\node[draw,circle] at (6,0) (t) {t};
	\node[draw,circle] at (3,2) (a) {a};
	\node[draw,circle] at (3,-2) (b) {b};
	\draw[->] (s) -- (a) node[midway, sloped, above] {$\set{1,k^2}$};
	\draw[->] (s) -- (b) node[midway, sloped, below] {$\set{k,k}$};
	\draw[->] (a) -- (b) node[midway, right] {$\set{1,1}$};
	\draw[->] (a) -- (t) node[midway, sloped, above] {$\set{k,k}$};
	\draw[->] (b) -- (t) node[midway, sloped, below] {$\set{1,k^2}$};
\end{tikzpicture}
    \caption{An example of the difficulty in extending our results beyond series-parallel networks.}
    \label{fig:braessparadox}
\end{figure}
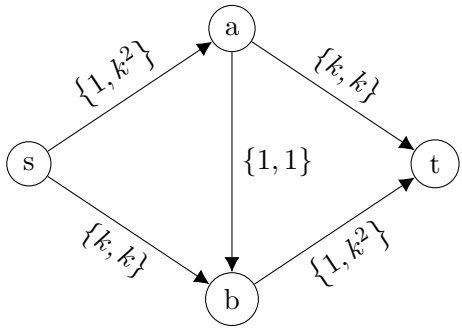

However, the main reason why we used an online algorithm as a guide to begin with is the fact that it provided a threshold $\hat{\ell}_e$ for each edge $e$ such that if we enforce this threshold as a capacity constraint on the number of players that use this edge, we can still achieve a good approximation even if the true number of agents $n$ is less than the predicted number of agents $\hat{n}$. The \onlinealgo\ algorithm goes one step further and commits to a myopic assignment of players to specific paths which is useful in our analysis (in identifying unilateral deviations in Lemma~\ref{lem:existOfOverUsedPath}), but may not be necessary. We could, instead, use the same online algorithm structure to determine these capacities without myopically committing to an assignment of players to paths. This way we can maintain the benefits discussed in Section~\ref{sec:online} without running into the obstacles that we observed for the Braess Paradox graph, above. Running this algorithm on the graph above for two players would yield a capacity of 1 on all of its edges, which would be consistent with the optimal assignment. In fact, this would maintain the competitive ratio of 4 much more broadly, but implementing its assignment as a Nash equilibrium (as we did in this paper) would require a new argument for the existence of unilateral deviations, as well as a different way of sharing the edge costs.

An alternative direction for future research would be to extend our results on multicast network formation games to the multicommodity setting, where each player $i$ may also have a different source $s_i$ (apart from a different sink $t_i$). However, we already know from \citet{CR09} that the PoA of any cost-sharing mechanism in this setting is $\Omega(\log n)$, even if the mechanism is omniscient, i.e., even if it has full information regarding the agents (which is stronger than a resource-aware mechanism with predictions). This observation points to an interesting distinction between the cost-sharing setting that we study in this paper and some recent work on graph algorithms by  \citet{APT21}: at a high level, the multicast setting that we study here bares some similarities with the online Steiner tree problem, and the multicommodity setting is analogous to the online Steiner forest problem. However, although the results of \citet{APT21} on the online Steiner tree problem are in line with the guarantees that we achieve in this paper, in their work they also achieve a constant consistency for the online Steiner forest problem, which is impossible for a cost-sharing mechanism in the multicommodity problem.

\bibliographystyle{plainnat}
\bibliography{biblio, cost-sharing}

\end{document}